\documentclass[a4paper,11pt, english,notitlepage,sumlimits]{article}

\usepackage[normalem]{ulem}

\usepackage[utf8x]{inputenc}
\usepackage[a4paper,top=3cm,bottom=3cm,left=3cm,right=3cm,marginparwidth=1.75cm]{geometry}
\usepackage{dsfont}
\usepackage{mathtools} 
\usepackage{subfig}
\usepackage{multirow}
\usepackage{xcolor}
\usepackage{enumitem}
\usepackage{amssymb}
\usepackage{fancyhdr}
\usepackage{lmodern}
\usepackage[T1]{fontenc}
\usepackage{caption}
\usepackage{float}
\usepackage{cancel} 
\usepackage{graphics} 
\usepackage{amsthm}
\usepackage{revsymb}
\usepackage{authblk}
\usepackage{amsmath}
\usepackage[most]{tcolorbox}

\usepackage[colorlinks=true,allcolors=black]{hyperref} 

\usepackage[style=ieee]{biblatex}
\usepackage{comment}

\newtheorem{theorem}{Theorem}
\newtheorem{lemma}[theorem]{Lemma}
\newtheorem{proposition}[theorem]{Proposition}

\newcommand{\norm}[1]{\left\lVert#1\right\rVert}

\newlength{\blank}
\newenvironment{myproof}[1][{\hspace{-\blank}}]{{\medskip\noindent\textit{Proof~{#1}.~{}}}}{\hfill$\qedsymbol$}

\newcommand{\tr}{\operatorname{Tr}}
\newcommand{\Tr}{\operatorname{Tr}}
\newcommand{\1}{\openone}

\newcommand{\cA}{\mathcal{A}}
\newcommand{\cB}{\mathcal{B}}

\newcommand{\cE}{\mathcal{E}}

\newcommand{\ox}{\otimes}

\newcommand{\ket}[1]{\vert #1\rangle}
\newcommand{\bra}[1]{\langle #1\vert}
\newcommand{\proj}[1]{\ket{#1}\!\bra{#1}}

\newcommand{\aw}[1]{\textcolor{blue}{#1}}

\begin{document}

\title{Robust 
Quantum Key Distribution\protect\\ Arbitrarily Close to Local Correlations}
\author[1,$\bullet$]{Hari Krishnan SV}
\author[1,2,3,$\circ$]{Andreas Winter}

\affil[1]{\small Grup d'Informaci\'o Qu\`antica, Departament de F\'isica,\protect\\ Universitat Aut\`onoma de Barcelona, 08193 Bellaterra (BCN), Spain}
\affil[$\bullet$]{Email: {\tt harikrishnan.sasidharan@uab.cat}\vspace{2mm}}
\affil[2]{\small ICREA--Instituci\'o Catalana de la Recerca i Estudis Avan{\c{c}}ats,\protect\\ Pg.~Llu\'is Companys, 23, 08010 Barcelona, Spain}
\affil[3]{\small Department Mathematik/Informatik--Abteilung Informatik,\protect\\ Universit\"at zu K\"oln, Albertus-Magnus-Platz, 50923 K\"oln, Germany}
\affil[$\circ$]{Email: {\tt andreas.winter@uni-koeln.de}}

\date{\small (25 August 2026)}

\maketitle

\begin{abstract}
Recently, Wooltorton \emph{et al.} [\href{https://doi.org/10.1103/PhysRevLett.132.210802}{Phys. Rev. Lett. 132, 210802 (2024)}] 
and Farkas [\href{ https://doi.org/10.1103/PhysRevLett.132.210803}{Phys. Rev. Lett. 132, 210803 (2024)}] 
have exhibited the mismatch between Bell inequality violations and their cryptographic application in device-independent quantum key distribution, by showing that arbitrarily close to the set of local behaviours there exist quantum correlations guaranteeing a constant rate of secret key.
While these results require correlations attaining the maximum quantum value of a suitable Bell observable (aka the Tsirelson bound) and rely on a kind of ideal self-testing of a maximally entangled state and associated Bell measurement, here we show that the effect is robust: for every one of the Bell inequalities considered by Wooltorton \emph{et al.}, a constant rate of secret key ensues if the observed Bell violation is sufficiently close to the respective Tsirelson bound. 
For these and also the Bell inequalities of Farkas, we furthermore present numerical results based on semidefinite relaxations of the minimum min-entropy consistent with a certain Bell violation, which demonstrate that small but nonzero key rates can be guaranteed (in principle) by practical and efficient means. 
\end{abstract}

\section{Introduction}
\label{sec:intro}
Bell's theorem \cite{Bell_1964} is one of the most ground-breaking ideas in physics in recent times. It states that there are quantum correlations that cannot be produced by systems with an underlying local hidden variable model. This nonlocality is one of the foundations of quantum information theory and is at the heart of the security of device-independent quantum key distribution (DIQKD) protocols \cite{Pironio_2009}. In DIQKD, two parties (traditionally called Alice and Bob) aim to generate a secret key using devices that are not characterised. This means that neither the state, the measurements, nor indeed the quantum systems involved are under their control. However, the parties can guarantee nonlocality by checking for a sufficiently large Bell inequality violation, and then monogamy of entanglement can be leveraged to ensure that their system is not entangled with that of an eavesdropper.  

The origins of this approach can be traced back to Bennett and Brassard's famous BB84 protocol \cite{bennett1984proceedings}, in which Alice and Bob generate a shared secret key by preparing, sending, and measuring qubits (e.g. the polarization states of photons). Ekert proposed the entanglement-based QKD protocol \cite{Ekert_1991}, where Alice and Bob share a Bell state and check for a violation of the Clauser-Horne-Shimony-Holt (CHSH) inequality \cite{CHSH}. For a sufficiently large violation, they can certify the presence of nonclassical correlations and make sure that no eavesdropping has happened. Nevertheless, these protocols are ultimately device-dependent. 

The first truly device-independent protocol of key generation was given by Barrett, Hardy and Kent for a general adversary limited only by the no-signalling principle \cite{Barrett_2005}. This protocol had very low key rate and lacked noise tolerance. A security analysis for a robust protocol with a quantum eavesdropper performing memoryless coherent (also known as the independent and identically distributed (i.i.d.)) attacks was given in \cite{Pironio_2009}. 
Security proofs for a general adversary restricted only by no-signalling were given in \cite{Vazirani_2014, Miller_2016,9062494,Masanes2014}. These results, however, had the disadvantage of again resulting in very low asymptotic key rates. A fully general security analysis of DIQKD assuming the validity of quantum mechanics also for the adversary and with asymptotic key rates comparable to the i.i.d. scenario was given in \cite{Arnon_Friedman_2018} using the Entropy Accumulation Theorem (EAT) \cite{Dupuis2019,Dupuis_2020}. 


An interesting fundamental question to ask is how much nonlocality is needed in order to obtain a non-zero key rate, or whether there is in fact a relation between the degree of nonlocality observed and the capacity to guarantee DI secret key. Recently, it was shown that nonlocality is actually not sufficient for DIQKD in a class of standard protocols \cite{Farkas_2021}, where the Bell value may be more than the classical limit, yet Eve cannot obtain any information on the key.   
In the other direction, \cite{Wooltorton_2024,Farkas_2024} have shown that certain nonlocal behaviours arbitrarily close to being local can nonetheless guarantee maximal key rates. This was done by self-testing a maximally entangled state and certain local measurements at the point of maximal violation of two families of Bell inequalities. 
However, in practice there is no way to be certain of a maximal violation: all that finite (and possibly noisy) statistics can give is a confidence interval, implying a lower bound on the true Bell violation of the behaviour concerned. This prompts the next question of whether the secret key rate is bounded away from zero if the Bell violation is sufficiently close to maximal. 

The notion of bounding the strategy achieving a suboptimal violation of a Bell inequality is referred to as robust self-test. This was first explored in \cite{mmmo2006} where the authors gave a device-independent robust version of the Mayers-Yao self-test \cite{Mayers2004SelfTesting}. 
Robust self-testing of the CHSH inequality was presented in \cite{PhysRevA.80.062327} and a simplified protocol for both Mayers-Yao and CHSH self-tests were given in \cite{McKague_2012}. Since then, there has been a lot of progress in robust self-testing of states and measurements \cite{PhysRevA.98.042336,PhysRevA.91.052111, PhysRevA.90.042339,PhysRevLett.121.180505}, with the current best robust bounds for two-qubit states given in \cite{PhysRevLett.117.070402}.  

Here, we approach the question of robust DI key rates rigorously for the family of Bell inequalities in \cite{Wooltorton_2024}. We propose a robust self-test, where for a violation that is within an $\epsilon>0$ from the Tsirelson bound, the state and the measurements are bounded by a distance that is a function of $\epsilon$, from the essentially unique strategy achieving the maximal violation. This subsequently also guarantees a nonzero key rate. 
We also give numerical bounds for the key rates using the methods in \cite{Brown_2024}, both for the family of protocols from \cite{Wooltorton_2024} and the one from \cite{Farkas_2024}.

\section{DIQKD protocols}
\label{sec:prelim}
We start with an overview of DIQKD protocols. Consider two players, Alice and Bob, who share a state, and can choose between settings $x$ and $y$ to feed into their local devices, producing outputs $a$ and $b$, respectively. In DIQKD, their devices are uncharacterized and are only restricted by no signaling (meaning that if Alice and Bob use them in a spacelike manner, they must act independently on whatever state was distributed before). 
In the quantum scenario, the eavesdropper, Eve, is assumed (without loss of generality) to hold a purification of Alice and Bob's state, and is thus potentially correlated (and even entangled) with them: $\ket{\psi}^{\cA\cB\cE}$. The protocol consists of two different types of rounds, which are chosen randomly among all $n\gg 1$ rounds: Bell test rounds and key generation rounds. In each round, Alice applies a POVM $A^x = (A^x_a)$ from a finite set $x\in X$, and Bob a POVM $B^y = (B^y_b)$ from a finite set $y\in Y$. 

In a Bell test round, Alice and Bob certify the nonlocality of their state by testing for the violation of a Bell inequality, whose maximum value over local hidden-variable strategies is $\beta$ (the Bell bound), while the maximum/supremum over quantum strategies (the Tsirelson bound) has to be $\tau > \beta$, and they require the observation of a sufficiently large value $\eta > \beta$. In these rounds, statistics is also gathered on the joint distribution of Alice and Bob's observations. 

In the key generation rounds, Alice and Bob obtain a raw key by performing a measurement on their respective parts of the shared state, corresponding to specific settings $x^*\in X$ and $y^*\in Y$ of the Bell scenario. The figure of merit is the key rate $R$, defined as the number of private shared key bits generated by Alice and Bob per round of the protocol (for a given security parameter). In the most general setting, Eve can apply a coherent attack, where she uses a different attack in each round of the protocol potentially keeping classical and quantum memory from previous rounds. For such an adversary, the best bound on the key rate was given in \cite{Arnon_Friedman_2018}, where the authors obtained a lower bound on the smooth min-entropy between Alice and Eve. In the asymptotic limit of $n\rightarrow\infty$, this bound reduces to the i.i.d. scenario, where each round is independent of the previous round and Eve applies the same independent attack in each round. In this limit, the key rate guaranteed by the protocol is given by the familiar expression \cite{Devetak_2005}
\begin{equation}
  \label{dw}
  R \geq \inf\, \bigl\{ H(A|\cE,X=x^*) - H(A|B,X=x^*,Y=y^*) \bigr\},
\end{equation}
where the entropies are evaluated on the classical-quantum state 
\[
  \Omega^{XYAB\cE} 
    = \sum_{xy} p(xy) \proj{x}^X \ox \proj{y}^Y 
       \ox \sum_{ab} \proj{a}^A \ox \proj{b}^B 
       \ox \tr_{\cA\cB} \psi(A^x_a\ox B^y_b\ox\1_{\cE}), 
\]
and the infimum is over all states $\psi^{\cA\cB\cE}$ of the joint system held by Alice, Bob, and Eve, and over all measurements done by Alice and Bob that are consistent with the estimated Bell score $\eta$ and the estimated probabilities of $A$ and $B$ given $X=x^*,\,Y=y^*$.

In the present paper, we study the family of Bell inequalities first introduced in \cite{Le_2023} and recently studied in \cite{Wooltorton_2024}. In those, Alice and Bob receive inputs $x,y\in\{0,1\}$ and they produce outputs $a,b\in\{0,1\}$. A quantum strategy in a finite-dimensional Hilbert space $\cA\otimes\cB$ is then given by the tuple $S = (\rho^{\cA\cB}, A^x, B^y)$ consisting of a density matrix $\rho^{\cA\cB}$ and observables $A^x=A^x_0 - A^x_1$ and $B^y=B^y_0 - B^y_1$, where $(A^x_a)_a$ and $(B^y_b)_b$ are POVMs on $\cA$ and $\cB$, respectively, which without loss of generality can be assumed to be projective (via the Naimark extension theorem, by embedding the local Hilbert spaces into larger ones). 
Let $\theta, \phi, \omega \in \mathbb{R}$ and define the family of Bell expressions from \cite{Le_2023},
\begin{equation}\begin{split}
  \label{bell function}
  T_{\theta,\phi,\omega} 
    &:= \cos(\theta + \phi)\cos(\theta + \omega) A^0 \ox \bigl(\cos\omega\,B^0 - \cos\phi\,B^1\bigr) \\
    &\phantom{=}
       +\cos\phi \cos\omega\, A^1 \ox \bigl(-\cos(\theta + \omega)\, B^0 + \cos(\theta + \phi)\, B^1\bigr).
\end{split}\end{equation}
Setting $\theta = -\frac{\pi}{2}$, $\phi = \frac{3\pi}{4}$, and $\omega = \frac{\pi}{4}$ recovers the standard CHSH correlator, up to a scaling factor of $\tfrac{1}{2\sqrt{2}}$. 

We denote the local (Bell) and quantum (Tsirelson) bounds
\begin{align}
  \beta(\theta,\phi,\omega) &= \max_{u=\pm 1} |\cos(\theta+\omega)(\cos\omega)(\cos(\theta+\phi)+u\cos\phi)| \nonumber\\
  &\phantom{=\max} + |\cos(\theta+\phi)(\cos\phi)(\cos(\theta+\omega)-u\cos\omega)| ,\text{ and} \\
  \tau(\theta,\phi,\omega)  &= \left| \sin\theta \sin(\omega-\phi) \sin(\theta+\omega+\phi)\right|,
\end{align}
respectively, cf.~\cite{Wooltorton_2024}. 
In \cite[Prop.~8~{\&}~Thm.~19]{Le_2023} and \cite{Wooltorton_2024} it is proved that whenever 
\begin{equation}
\label{eq: condition for quantum greater than classical}  
    \cos(\theta + \phi)\cos\phi \; \cos(\theta + \omega)\cos\omega < 0,
\end{equation}
then $\tau(\theta,\phi,\omega) > \beta(\theta,\phi,\omega)$, and the attainment of the maximal quantum value, i.e. $\langle T_{\theta,\phi,\omega} \rangle = \tau(\theta,\phi,\omega)$, self-tests the following quantum strategy up to dummy gauge systems and local isometries:
\begin{align}
\label{Eq:optimal quantum strategy}
\psi &= \ket{\psi}\!\bra{\psi},
\quad
\text{where }
\ket{\psi} = \frac{1}{\sqrt{2}}(\ket{00} + i \ket{11}) \nonumber\\
A^{Q0} &= \sigma_X,
           \quad\qquad\qquad\qquad\qquad\, 
          A^{Q1} = (\cos\theta)\sigma_X + (\sin\theta)\sigma_Y,  \\
B^{Q0} &= (\cos\phi)\sigma_X + (\sin\phi)\sigma_Y,
           \quad
          B^{Q1} = (\cos\omega)\sigma_X + (\sin\omega)\sigma_Y. \nonumber
\end{align}
This means that any quantum strategy achieving the maximal violation is equivalent to the ideal strategy up to local isometries, and hence if Alice and Bob observe maximal violation of the Bell inequality, they obtain the same statistics as they would from the strategy given in Eq.~\eqref{Eq:optimal quantum strategy}. 

For our analysis, we use essentially the same protocol given in \cite{Wooltorton_2024}. Alice and Bob have access to uncharacterised devices, whose inner workings are unknown to them, but they do not broadcast the outputs maliciously, and they have to operate at spacelike separation. Alice and Bob each also have access to a private random number generator and an authenticated classical channel between them (which however is monitored by Eve). In each round of the protocol, Alice uses her private random number generator to decide whether the round is a test round or a generator round and communicates this information to Bob using an authenticated classical channel after the round. Then Alice and Bob proceed to obtain inputs $x,y \in \{0,1\}$ from their private random number generator and feed them to their local devices, which performs measurements on the shared state and produces outputs $a,b\in \{0,1\}$. 
During test rounds, Alice and Bob independently choose inputs \(X = x \in \{0,1\}\) and \(Y = y \in \{0,1\}\) uniformly at random. In generation rounds, the inputs are fixed deterministically to \(X = 0\) and \(Y = 0\). After a large number of rounds have taken place, they estimate $\langle T_{\theta,\phi,\omega}\rangle$ and if the value is not adequately large, they abort the protocol. If the value is sufficiently large, they proceed to the classical post-processing step, namely error correction and privacy amplification in order to distill the secret key.

Furthermore, when the key generation settings are $x^*=y^*=0$, for $\phi$ sufficiently close to $\frac{\pi}{2}$ and with $\Delta R(\phi) = H\!\left(\frac{1\pm\sin\phi}{2}\right)\in (0,1]$, the key rate of the self-tested strategy is $R \geq 1-\Delta R(\phi)$; the reason is that $A$ is distributed uniformly at random, while $\Pr\{A=B|X=x^*,Y=y^*\} = \frac{1+\sin\phi}{2}$.
Note that $\Delta R(\phi)\rightarrow 0$ as $\phi\rightarrow\frac{\pi}{2}$. At the same time, in the same range of $\phi$ and for $\theta$ and $\omega-\phi$ nonzero but small, $\tau(\theta,\phi,\omega) - \beta(\theta,\phi,\omega) > 0$ but the difference gets arbitrarily small: indeed, when $\theta\rightarrow 0$ or $\omega\rightarrow\phi$, both of Alice's or both of Bob's observables in Eq.~\eqref{Eq:optimal quantum strategy} collapse to the same observable, and it is well-known that in this limit a local behaviour is generated. Hence, in this regime we know that the behaviour is arbitrarily close to being local. 

It is natural to conjecture that the self-test used by \cite{Wooltorton_2024} is robust (which was suggested in \cite[Sec.~4.3]{Le_2023} without working out the proof). Here we show that the device-independent key distillation protocol is robust in the sense that for $\eta := \langle T_{\theta,\phi,\omega} \rangle \geq \tau(\theta,\phi,\omega) - \epsilon$, a key rate $R\geq 1-\Delta R(\phi) - f(\epsilon)$ is guaranteed, where $f(\epsilon)\rightarrow 0$ as $\epsilon\rightarrow 0$. In Section \ref{sec:main}, we state our main result and in Section \ref{sec:methods} we present the proofs. The robust protocol analysis proceeds by first noting that we have two dichotomic projective measurements per party, and this allows us to use Jordan's Lemma to decompose the state and measurements into a block-diagonal system, where each block is a two-qubit system (Subsection \ref{jordan lemma}). Then we use the compactness of the set of quantum strategies to show that any strategy on two qubits that achieves a value of $\eta$ close to $\tau(\theta,\phi,\omega)$ is close to the set of optimal strategies (Subsection \ref{two qubit strategy}). 
Finally, in Subsection \ref{subsec:robust-key-rate} we show that these elements provides a direct route to robust key rate lower bounds for the protocol in \cite{Wooltorton_2024}, without the need to establish a robust self-test in general.
In Section \ref{numericals} we then give concrete numerical bounds on the robust key rate for the protocols from \cite{Wooltorton_2024} and \cite{Farkas_2024}, by bounding the min-entropy using iterated mean divergences and SDP relaxations. We conclude in Section \ref{sec:discussion}.

\section{Main result}
\label{sec:main}
We now come to the main results of this paper. We
give a robust DIQKD with non-zero key rate, based on a robust self-test for the near-attainment of the Tsirelson bound of the Bell correlator $T_{\theta,\phi,\omega}$ when Alice and Bob's systems are known to be qubits.

\begin{theorem}
  \label{thm:robust-key-rate}
  Consider the Bell parameter $T_{\theta,\phi,\omega}$ and assume $\tau(\theta,\phi,\omega) > \beta(\theta,\phi,\omega)$. Then there exists a function $f \equiv f_{\theta,\phi,\omega} : \mathbb{R}_+\rightarrow\mathbb{R}_+$ with $f(0)=0$ and $\lim_{x\searrow 0} f(x)=0$, such that whenever $\eta = \langle T_{\theta,\phi,\omega} \rangle \geq \tau(\theta,\phi,\omega) - \epsilon > \beta(\theta,\phi,\omega)$, the protocol described in Section \ref{sec:prelim} guarantees a device independent key rate $R \geq 1 - \Delta R(\phi) - f(\epsilon)$. 
  The function $f$ is bounded as $f(x) = O(-\sqrt{x}\log x)$, where the implicit constant depends on $\theta$, $\phi$ and $\omega$. 
\end{theorem}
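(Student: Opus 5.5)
The plan follows the route announced in the introduction. First, Jordan's lemma reduces everything to qubits. Second, a compactness argument gives a robust self-test on two qubits. Third, continuity bounds on the entropies turn that closeness into a key rate bound, and convexity lifts it back to general strategies.

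\textbf{Step 1 (Jordan reduction).} Since each party has two dichotomic projective observables, Jordan's lemma decomposes $\cA=\bigoplus_j \cA_j$ and $\cB=\bigoplus_k \cB_k$ into blocks of dimension at most two that are invariant under $A^0,A^1$ and $B^0,B^1$ respectively. Alice and Bob can measure the block labels $j,k$ without disturbing the statistics. Eve can be handed these labels as well, which only lowers $H(A|\cE)$. The strategy therefore becomes a convex combination $\sum_{jk}p_{jk}S_{jk}$ of two-qubit strategies $S_{jk}=(\rho_{jk},A^x_j,B^y_k)$ (padding one-dimensional blocks to qubits). The Bell value then satisfies $\eta=\sum p_{jk}\eta_{jk}$ with $\eta_{jk}\le\tau$. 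By Markov's inequality, the total weight of blocks with $\eta_{jk}<\tau-\sqrt\epsilon$ is at most $\sqrt\epsilon$.

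\textbf{Step 2 (qubit robust self-test).} For two-qubit strategies I would show that $\eta_{jk}\ge\tau-\delta$ implies the strategy is within distance $C\sqrt\delta$, after local unitaries, of the ideal strategy~\eqref{Eq:optimal quantum strategy}. On each block, each observable is $\pm$ a Pauli along a unit vector, possibly a trivial $\pm\1$, so the qubit strategy set is compact and finite-dimensional. Qualitatively, the exact self-test of \cite{Le_2023,Wooltorton_2024} says the maximizers form exactly the unitary orbit of the ideal strategy, and compactness then gives $\mathrm{dist}\to0$ as $\delta\to0$.

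To get the $\sqrt\delta$ rate, I would use the fact that for fixed observables the Bell operator value is maximized by its top eigenvector with a spectral gap. I would also argue that the maximizing manifold of measurement angles is nondegenerate, i.e.\ the Hessian is negative definite transversally to the gauge orbit, so the deficit is quadratic in the distance. Equivalently, an SOS decomposition $\tau\1-T=\sum_i P_i^\dagger P_i$ from \cite{Le_2023} gives $\|P_i\ket\psi\|\le\sqrt\delta$, which yields the $\sqrt\delta$ closeness directly. I expect this quantitative step to be the main obstacle. If the SOS certificate is not available, I would fall back on the explicit qubit parametrisation with a local quadratic-growth argument at the isolated optimum.

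\textbf{Step 3 (key rate).} On a good block, closeness $\le C\epsilon^{1/4}$ (taking $\delta=\sqrt\epsilon$) of the state purification and of $A^0,B^0$ to the ideal ones implies two things. The cq-state $\Omega^{A\cE}$ is close in trace distance to the ideal state, in which $A$ is uniform and independent of $\cE$ since the ideal state is pure and maximally entangled. Also $\Pr\{A\neq B\}$ is close to $\frac{1-\sin\phi}{2}$. The Alicki--Fannes--Winter continuity bound then gives $H(A|\cE)\ge 1-O(\kappa\log(1/\kappa))$ for trace distance $\kappa$, and similarly $H(A|B)\le\Delta R(\phi)+O(\kappa\log(1/\kappa))$. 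The bad blocks contribute $H(A|\cE)\ge0$ with weight $\le\sqrt\epsilon$.

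Averaging over blocks via concavity and the block-label conditioning, and inserting into~\eqref{dw}, gives $R\ge1-\Delta R(\phi)-f(\epsilon)$ with $f\to0$. Better bookkeeping, namely Markov's threshold chosen optimally and AFW applied to the averaged state, yields $f(x)=O(-\sqrt x\log x)$. Constants come from the qubit step and depend on $\theta,\phi,\omega$.
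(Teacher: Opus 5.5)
Up to the qualitative claim $f(\epsilon)\to 0$, your argument is the paper's. Both use Jordan blocks whose labels are handed to Eve, Markov on the block Bell values, and compactness plus the exact self-test of \cite{Le_2023,Wooltorton_2024} to get a qualitative two-qubit robust self-test. Both then apply Alicki--Fannes--Winter to $H(A|\cE)$ and Fano to $H(A|B)$. For the latter you must average the error probability, which is linear in the state, not the per-block entropies, because concavity of $H(A|B)$ points the wrong way.

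The gap is the stated rate $f(x)=O(-\sqrt{x}\log x)$. Compactness gives no rate, and your $O(\sqrt{\delta})$ per-block closeness is asserted, not proved. Its state half is in fact easy. With each party's observables rotated into the $XZ$ plane, the top eigenvector of $T_{\theta,\phi,\omega}$ is a locally rotated Bell state whenever it is non-degenerate. Non-degeneracy near the optimum follows from the product-vector argument showing $\tau\1-T_{\theta,\phi,\omega}$ has rank at least $3$. So the fidelity deficit is $O(\delta)$, and this alone controls $H(A|\cE)$.

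The angle half controls the error rate and hence $H(A|B)$, and it is exactly the nondegeneracy you postulate. The paper's quantitative input here is the SOS $\tau\1-T_{\theta,\phi,\omega}=\sum_\mu c_\mu R_\mu^\dagger R_\mu$ with $c_\mu>0$, applied to the reduced strategy. There $\|R_\mu\ket{\Phi_0}\|^2\le 4\epsilon/c_\mu$ becomes trigonometric inequalities in $a_1,b_0,b_1$ whose linearisation at the optimum has full rank; for instance, the $(a_1,b_0)$ block coming from $R_0$ has determinant $-\sin\theta\cos\phi\cos(\theta+\phi)\neq 0$. As written, however, the paper passes from $\cos b_0$ to $\sin b_0$ via $|\sqrt a-\sqrt b|\le\sqrt{|a-b|}$. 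That makes its $\mathcal{F}_{A_1}$, $\mathcal{F}_{B_y}$ and $g_2$ only $O(\epsilon^{1/4})$. To get $O(\sqrt{\epsilon})$, use $\cos\phi\neq 0$ and divide $|\cos^2\phi-\sin^2 b_0|$ by $|\cos\phi|+|\sin b_0|$.

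Second, the bookkeeping in your Step 3 cannot reach $\sqrt{x}$, even granting an $O(\sqrt{\delta})$ self-test. Markov at threshold $\sqrt{\epsilon}$ gives $O(\epsilon^{1/4}\log(1/\epsilon))$. No threshold $t$ beats $\epsilon^{1/3}$ up to logarithms, since the bad weight $\epsilon/t$ must be balanced against $\sqrt{t}$. The paper's final proof has the same limitation: it evaluates $g_2$ at $\sqrt{\epsilon}$, which with $g_2=O(\epsilon^{1/4})$ yields only $O(\epsilon^{1/8}\log(1/\epsilon))$.

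What works is to drop Markov entirely:
\begin{itemize}
\item Set $\epsilon_{ij}=\tau-\eta_{ij}$ and bound each block's trace distance by $\kappa_{ij}\le\min\{1,C\sqrt{\epsilon_{ij}}\}$ with a global constant $C$. Concavity of the square root then gives $\sum_{ij}p_{ij}\kappa_{ij}\le C\sqrt{\epsilon}$.
\item Apply AFW once, comparing $\sum_{ij}p_{ij}\widetilde{\Omega}^{00}_{ij}\ox\proj{i}^I\ox\proj{j}^J$ with $\sum_{ij}p_{ij}\tfrac{1}{2}\1\ox\xi_{ij}\ox\proj{i}^I\ox\proj{j}^J$, for which $H(A|\cE IJ)=1$.
\item Bound the averaged error probability by $\tfrac{1-\sin\phi}{2}+C\sqrt{\epsilon}$, then apply Fano.
\end{itemize}
Your phrase ``AFW applied to the averaged state'' points this way, but it has to replace the Markov step, not be combined with it.
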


\noindent
We proceed with the proof in the following section. 


\section{Proofs}
\label{sec:methods}
In this section, we prove the main theorem. 
We first simplify the analysis to break up any general strategy into a direct sum of two-qubit ones in Subsection \ref{jordan lemma}. 
We then leverage the known exact self-test for $T_{\theta,\phi,\omega}$ to argue that for two-qubits (more generally in finite dimension) it must be robust, in Subsection \ref{two-qubit-robust:compactness}. 
After that, we give an explicit bound on the robustness of the self-test of the two-qubit strategy in Subsection \ref{two qubit strategy},
based on a sum-of-squares decomposition. 
We then use these insights and Alicki-Fannes-type continuity bounds for the entropy to obtain a robust lower bound on the key rate in Subsection \ref{subsec:robust-key-rate}, proving Theorem \ref{thm:robust-key-rate}.

\subsection{Reduction to qubits}
\label{jordan lemma}
We start with a general measurement strategy for $T_{\theta,\phi,\omega}$, i.e. a joint state $\rho^{\cA\cB} = \tr_{\cE} \psi^{\cA\cB\cE}$ and local observables $A^x = A^x_0-A^x_1$ and $B^y = B^y_0-B^y_1$, for $x,y\in\{0,1\}$. We assume, both in the self-testing setting and the cryptographic setting of secret key generation, that these elements are given to Alice and Bob adversarially by Eve. 
However, both for Alice and Bob's observed statistics and for Eve's post-measurement states, nothing changes if we assume that the $A^x$ and $B^y$ are $\pm 1$-valued (equivalently that the $A^x_a$ and $B^y_b$ are projectors), by way of Naimark's extension theorem (enlarging, if necessary, the Hilbert spaces $\cA$ and $\cB$, respectively).
Jordan's Lemma \cite{Jordan_1875} allows us to simplify this setup significantly: it states that given two Hermitian projection operators in finite dimensional (or more generally separable) Hilbert space, it is possible to simultaneously block-diagonalise them into blocks of size at most $2$, thereby reducing the Hilbert space to a direct sum of two qubit Hilbert spaces and the two projectors to a direct sum of projectors in these qubit spaces. In fact, by enlarging the Hilbert space, we can without loss of generality assume that all blocks have size $2$. 

We apply this to the operators $A^x$ on $\cA$ and $B^y$ on $\cB$, giving projectors $P_i$ and $Q_j$ of rank two, respectively, satisfying $\1_{\cA}=\sum_i P_i$ and  $\1_{\cB}=\sum_j Q_j$, as well as $[A^x,P_i]=0$ and $[B^y,Q_j]=0$ for all $x$ and $y$. Thus,   
\begin{equation}
A^x_a = \sum_{i} P_i A^x_a,
\qquad
B^y_b = \sum_{j} Q_j B^y_b .
\end{equation}
Let us denote the projected operators as $\tilde{A}^x_{a|i} = P_i A^x_a$ and $\tilde{B}^y_{b|j}= Q_j B^y_b$. By suitable unitaries we can identify $\cA$ with $\cA'\ox\widetilde{\cA}$ and $\cB$ with $\cB'\ox\widetilde{\cB}$, where $\cA'$ is spanned by the orthonormal basis $\{\ket{i}\}$ and $\cB'$ by $\{\ket{j}\}$, and $\widetilde{\cA}$ and $\widetilde{\cB}$ are qubits, such that $P_i = \proj{i}\ox\1_{\widetilde{\cA}}$ and $Q_j = \proj{j}\ox\1_{\widetilde{\cB}}$. Furthermore, 
\begin{equation}
\label{eq:Jordan-observables}
A^x_a = \sum_{i} \proj{i} \ox \tilde{A}^x_{a|i},
\qquad
B^y_b = \sum_{j} \proj{j} \ox \tilde{B}^y_{b|j}.
\end{equation}

The joint state of Alice, Bob and Eve can also be reduced to a direct sum of two-qubit operators, giving us
\begin{equation}
\label{eq:Jordan-state}
\ket{\Psi} = \sum_{i,j} (P_i\ox Q_j\ox \1_{\cE}) \ket{\Psi}
=: \sum_{i,j} \sqrt{p_{ij}} \ket{i}^{\cA'}\otimes\ket{j}^{\cB'}\otimes\ket{\widetilde{\Psi}_{ij}}^{\widetilde{\cA}\widetilde{\cB}\cE}.
\end{equation}

Suppose that for some $\epsilon>0$ we obtain a Bell violation $\eta = \langle T_{\theta,\phi,\omega} \rangle \geq \tau(\theta,\phi,\omega) -\epsilon > \beta(\theta,\phi,\omega)$. The strategy can be equivalently described by the adversary generating a pair of indices $(i,j)$ with probability $p_{ij}$ and provides the state $\ket{\widetilde{\Psi}_{ij}}$ to Alice and Bob (who get a qubit each), which is subjected to the local measurements with projectors $\tilde{A}^x_{a|i}$ and $\tilde{B}^y_{b|j}$. This assigns to each $2\times 2$-block a Bell score $\eta_{ij}$, and the original Bell score is the average over all the blocks: $\eta = \sum_{i,j} p_{ij} \eta_{ij}$. 
From the Markov inequality we obtain $\Pr\{\eta_{ij} \geq \tau(\theta,\phi,\omega)-\sqrt{\epsilon}\} \geq 1 - \sqrt{\epsilon}$. In other words, as $\epsilon$ gets smaller, the total probability weight of blocks $(i,j)$ where $\eta_{ij} < \tau(\theta,\phi,\omega)-\sqrt{\epsilon}$ is arbitrarily small and so their very contribution to the state $\ket{\Psi}$, and with it the key distillation protocol, is bounded. Thus, we will eventually focus on the blocks $(i,j)$ with $\eta_{ij} \geq \tau(\theta,\phi,\omega)-\sqrt{\epsilon}$, assuming tacitly that the latter is still $> \beta(\theta,\phi,\omega)$, and study how far the corresponding two-qubit strategy is from the optimal strategy.

\subsection{Robustness of the self-test for two-qubit strategies}
\label{two-qubit-robust:compactness}
We can define a distance between two strategies given by tuples $S=(\rho^{AB}, A^x_a, B^y_b)$ and $S'= (\rho^{\prime AB}, A'^x_a, B'^y_b)$, for example by letting $d(S,S') = d_1(\rho^{AB},\rho^{\prime AB}) + \sum_{x,a} d_2(A^x_{a},A^{\prime x}_{a}) + \sum_{y,b} d_2(B^y_{b},B^{\prime y}_{b})$, where $d_1$ is a distance between density matrices (say the trace distance), and $d_2$ is a distance POVM elements (say the operator norm). With this definition, the set of strategies equipped with 
$d$ forms a metric space $\mathcal{S}$. Further, since we are dealing with two-qubit strategies, the dimension of the Hilbert space is finite, and hence this space of strategies is compact. We now recall/restate the following lemma for compact metric spaces. 

\begin{lemma}
\label{qubit strategy}
  On a compact metric space $X$ with distance function $d$, consider $F:X\rightarrow \mathbb{R}$ a continuous real-valued function, and let $\tau = \sup_{x\in X} F(x)$, which by compactness is attained on the elements of the closed nonempty set $X_{\max} = F^{-1}(\tau) \subset X$. 

  Then there exists a function $\delta:\mathbb{R}_+ \rightarrow \mathbb{R}_+$, which we may assume to be monotonic, with $\delta(0)=0$ and $\lim_{x\searrow 0} \delta(x) = 0$, such that if $F(x)\geq \tau - \epsilon$ for some $\epsilon \geq 0$, it follows that $d(x,X_{\max}) := \inf_{x'\in X_{\max}} d(x,x') \leq \delta(\epsilon)$.
\end{lemma}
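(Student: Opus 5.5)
The plan is to define $\delta$ directly as the worst-case distance to $X_{\max}$ among near-optimal points:
\[
  \delta(\epsilon) := \sup\bigl\{ d(x,X_{\max}) : x\in X,\ F(x)\geq \tau-\epsilon \bigr\}.
\]
The superlevel set $\{F\geq\tau-\epsilon\}$ contains $X_{\max}\neq\emptyset$, so the supremum is over a nonempty set. It is finite because a compact metric space is bounded, so $\delta(\epsilon)\leq \operatorname{diam}X<\infty$. The implication in the statement then holds by construction: if $F(x)\geq\tau-\epsilon$, then $d(x,X_{\max})\leq\delta(\epsilon)$. Monotonicity is immediate, since the superlevel sets grow with $\epsilon$. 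Also $\delta(0)=0$, because $F(x)\geq\tau$ forces $F(x)=\tau$, i.e.\ $x\in X_{\max}$, and then $d(x,X_{\max})=0$.

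The only real content is $\lim_{\epsilon\searrow0}\delta(\epsilon)=0$. I would argue by contradiction using sequential compactness. Suppose the limit fails. Since $\delta$ is monotone, there is then some $c>0$ with $\delta(\epsilon)>c$ for all $\epsilon>0$. So for each $n$ we can pick $x_n\in X$ with $F(x_n)\geq\tau-1/n$ and $d(x_n,X_{\max})>c$.

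By compactness a subsequence converges, $x_{n_k}\to x^*$. By continuity of $F$,
\[
  F(x^*)=\lim_k F(x_{n_k})\geq\tau,
\]
and hence $x^*\in X_{\max}$. The function $x\mapsto d(x,X_{\max})$ is $1$-Lipschitz, which follows from the triangle inequality applied inside the infimum. Therefore $d(x_{n_k},X_{\max})\leq d(x_{n_k},x^*)\to0$, contradicting $d(x_{n_k},X_{\max})>c$.

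Two remaining points complete the argument. First, $X_{\max}$ is closed, as the preimage of a point under a continuous map, and nonempty, because a continuous function attains its supremum on a compact set. Second, if one wants $\delta$ defined on all of $\mathbb{R}_+$ without any convention for large $\epsilon$, the bound by $\operatorname{diam}X$ already handles this.

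I do not expect a genuine obstacle. The one step needing care is the contradiction argument, namely passing from "the limit is not zero" to a uniform $c>0$; monotonicity of $\delta$ is what makes this clean. Note that the argument gives no quantitative rate, which is why the explicit sum-of-squares bound is needed later.
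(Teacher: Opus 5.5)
Your proof is correct and follows the same route as the paper. Both argue by contradiction: take a sequence of near-maximisers at distance bounded away from $X_{\max}$, extract a convergent subsequence by compactness, and observe that its limit lies in $X_{\max}$. Your explicit choice $\delta(\epsilon) := \sup\{d(x,X_{\max}) : F(x)\geq\tau-\epsilon\}$ makes rigorous what the paper leaves implicit: monotonicity, $\delta(0)=0$, and the passage from ``no such function exists'' to a uniform gap $c>0$.
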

\begin{proof}
Assume, by way of contradiction, that there does not exist a function $\delta$ satisfying the conditions stated. This would mean that there exists a $\delta>0$ and points $x_n\in X$ such that $F(x_n) \geq \tau-\frac1n$, while $d(x_n,X_{\max}) \geq \delta$ for all $n$. Then, since $X$ is compact, the sequence $(x_n)$ has a subsequence $(x_{n_k})$ that converges to a point  $x_*\in X$. Since $F$ is a continuous function, $F(x_*) = \lim_{n\rightarrow\infty}F(x_{n_k}) = \tau$ as $k\rightarrow\infty$, and this implies that $x_* \in X_{\max}$. At the same time, the metric distance and hence $d(x,X_{\max})$ are continuous functions, too, so $d(x_*,X_{\max}) \geq \delta > 0$, contradicting the former statement. 
\end{proof}


We apply this to the space of two-qubit strategies and the function $F(S) = \langle T_{\theta,\phi,\omega} \rangle$, which is trilinear in the components of $S = (\rho^{AB}, A^x, B^y)$ and thus continuous. Thanks to \cite{Le_2023, Wooltorton_2024} we already know the set $\mathcal{S}_{\max}$ of maximisers, it is precisely those given by Eq.~\eqref{Eq:optimal quantum strategy}, let us denote it $S_0=(\psi,A^{Qx}_a,B^{Qy}_b)$, up to local (qubit) unitaries $U$ and $V$:
\[
  \mathcal{S}_{\max} = \left\{ S_0^{U,V} = (U\ox V)\psi (U\ox V)^\dagger, U A^{Qx}_a U^\dagger, V B^{Qy}_b V^\dagger) : U\in \text{SU}(\widetilde{\cA}),\, V\in \text{SU}(\widetilde{\cB}) \right\}.
\]
Hence, from Lemma \ref{qubit strategy} we get the function $\delta$ such that for each block $(i,j)$ with strategy $S_{ij}$ that achieves Bell score $\eta_{ij}\geq\tau(\theta,\phi,\omega)- \epsilon_{ij}$, we get $d(S_{ij},\mathcal{S}_{\max}) \leq \delta(\epsilon_{ij})$. In other words, 
there exist local qubit unitaries $U_{ij}$ and $V_{ij}$ such that $d(S_{ij},S_0^{U_{ij},V_{ij}}) \leq \delta(\epsilon_{ij})$. 

We fix the general statement in the following form.
\begin{lemma}
\label{lemma:two-qubit-robust}
Given $\theta,\phi,\omega$ such that $\tau(\theta,\phi,\omega) > \beta(\theta,\phi,\omega)$, there is a function $g_2:\mathbb{R}_+ \rightarrow \mathbb{R}_+$ with $g_2(0)=0$ and $\lim_{x\searrow 0} g_2(x)=0$, and with the following property:
for any two-qubit strategy $S = (\rho^{\cA\cB},A^x_a,B^y_b)$ with $\rho = \tr_{\cE} \Psi$ and $\eta = \langle T_{\theta,\phi,\omega} \rangle \geq \tau(\theta,\phi,\omega)-\epsilon$, there exist a pure state $\ket{\xi} \in \cE$ and local unitaries $U$ and $V$ satisfying
\[
    \left\| \bigl( U\otimes V \otimes \1_{\cE}\bigr)
            \bigl( A^x\otimes B^y \otimes \1_E\bigr)\ket{\Psi}^{\cA\cB\cE} 
       - \bigl( A^{Qx}\otimes B^{Qy} \otimes \1_{\cE}\bigr) \ket{\psi}^{\cA\cB}
            \otimes \ket{\xi}^{\cE} \right\|_2
    \leq g_2(\epsilon),
\]
for all $x,y\in\{0,1\}$, where $\ket{\psi}$, $A^{Qx}$ and $B^{Qy}$ are given in Eq.~\eqref{Eq:optimal quantum strategy}. 
\end{lemma}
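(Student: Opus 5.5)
We derive Lemma \ref{lemma:two-qubit-robust} from Lemma \ref{qubit strategy}, using the exact self-test of \cite{Le_2023,Wooltorton_2024}, which says the maximisers of $F(S)=\langle T_{\theta,\phi,\omega}\rangle$ over two-qubit strategies form exactly the orbit $\mathcal{S}_{\max}$ of $S_0$ under local unitaries. The plan has three steps: get closeness of strategies, turn closeness of the reduced state into closeness of purifications, and then compare the vectors with the observables applied.

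\emph{Step 1: closeness of strategies.} Apply Lemma \ref{qubit strategy} on the compact space of two-qubit strategies with the metric $d$ defined above. This gives a monotone function $\delta$ with $\delta(\epsilon)\to 0$ as $\epsilon\to0$, and, because $\mathcal{S}_{\max}$ is compact so the infimum is attained, local unitaries $U,V$ with $d\bigl(S,S_0^{U^\dagger,V^\dagger}\bigr)\le\delta(\epsilon)$. Conjugating by $U\otimes V$ (which preserves trace distance and operator norm) we get
\[
\tfrac12\bigl\|(U\otimes V)\rho(U\otimes V)^\dagger-\psi\bigr\|_1\le\delta(\epsilon),\qquad
\bigl\|UA^xU^\dagger-A^{Qx}\bigr\|_\infty\le 2\delta(\epsilon),
\]
and the analogous bound for $VB^yV^\dagger-B^{Qy}$. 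The factor $2$ comes from $A^x=A^x_0-A^x_1$.

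\emph{Step 2: purifications.} The vector $(U\otimes V\otimes\1)\ket{\Psi}$ purifies $\rho'=(U\otimes V)\rho(U\otimes V)^\dagger$. Since $\psi$ is pure, any purification of it has the form $\ket{\psi}\otimes\ket{\xi}$. Fuchs--van de Graaf gives $F(\rho',\psi)\ge1-\delta$. Uhlmann's theorem then supplies a unit vector $\ket{\xi}\in\cE$ with
\[
\bigl|\langle\psi\otimes\xi|(U\otimes V\otimes\1)|\Psi\rangle\bigr|=\sqrt{F(\rho',\psi)}.
\]
We absorb the phase into $\ket{\xi}$, choose the Uhlmann fidelity convention consistently, and use a Hilbert space $\cE$ of dimension at least $4$, which is without loss of generality because the space can be enlarged. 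This yields
\[
\bigl\|(U\otimes V\otimes\1)\ket{\Psi}-\ket{\psi}\ket{\xi}\bigr\|_2\le\sqrt{2\bigl(1-\sqrt{1-\delta}\bigr)}\le\sqrt{2\delta}.
\]
This step is the main point of care. The lemma asks for a single vector $\ket{\xi}$, independent of $x,y$, so the closeness must be established at the level of the purification rather than only for the reduced state.

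\emph{Step 3: apply the observables.} Write $\ket{\Psi'}=(U\otimes V\otimes\1)\ket{\Psi}$, $\tilde A=UA^xU^\dagger$ and $\tilde B=VB^yV^\dagger$. Then
\[
(U\otimes V\otimes\1)(A^x\otimes B^y\otimes\1)\ket{\Psi}=(\tilde A\otimes\tilde B\otimes\1)\ket{\Psi'}.
\]
Use the telescoping split
\[
\tilde A\otimes\tilde B-A^{Qx}\otimes B^{Qy}=(\tilde A-A^{Qx})\otimes\tilde B+A^{Qx}\otimes(\tilde B-B^{Qy}).
\]
All observables have norm at most $1$, so the triangle inequality gives a bound of
\[
\|\ket{\Psi'}-\ket{\psi}\ket{\xi}\|_2+\|\tilde A-A^{Qx}\|_\infty+\|\tilde B-B^{Qy}\|_\infty\le\sqrt{2\delta(\epsilon)}+4\delta(\epsilon).
\]
So we may take $g_2(\epsilon):=\sqrt{2\delta(\epsilon)}+4\delta(\epsilon)$. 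It satisfies $g_2(0)=0$ and $g_2(x)\to0$ as $x\searrow0$, and the same $U,V,\xi$ work for all $x,y$. The qualitative argument stops here. An explicit rate for $\delta$, as needed later for the $O(-\sqrt{x}\log x)$ claim, is deferred to the sum-of-squares analysis of Subsection \ref{two qubit strategy}.
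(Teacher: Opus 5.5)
Your proposal is correct and follows the paper's own route. Lemma~\ref{qubit strategy} together with the exact self-test gives $\delta(\epsilon)$-closeness to the local-unitary orbit of $S_0$; Uhlmann's theorem (via Fuchs--van de Graaf) lifts this to a purification bound with a single $\ket{\xi}$; and the triangle inequality with $\|A^x\|,\|B^y\|\le 1$ covers all $x,y$. Your only addition is to make the constant explicit as $g_2=\sqrt{2\delta}+4\delta$. One small remark: enlarging $\cE$ is unnecessary, since $\ket{\xi}\propto(\bra{\psi}\otimes\1_{\cE})(U\otimes V\otimes\1_{\cE})\ket{\Psi}$ already lies in $\cE$.
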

\begin{proof}
From the above argument, we get that there exist qubit unitaries $U^\dagger$ and $V^\dagger$ such that $d(S,S_0^{U^\dagger,V^\dagger}) \leq \delta(\epsilon)$, in particular $d_1\!\left(\rho,(U\ox V)^\dagger\psi(U\ox V)\right) \leq \delta(\epsilon)$ and for all $x,a,y,b$, $d_2\!\left(A^x_a,U^\dagger A^{Qx}_a U \right),\, d_2\!\left(B^y_b, V^\dagger B^{Qy}_b V \right) \leq \delta(\epsilon)$.

By the equivalence of the trace norm and the purified distance, and invoking Uhlmann's theorem, the former implies that there exists a $\ket{\xi} \in \cE$ with 
\[
  \left\| (U\ox V\ox\1_{\cE})\ket{\Psi}^{\cA\cB\cE} - \ket{\psi}^{\cA\cB}\ox\ket{\xi}^{\cE} \right\|_2 \leq g_2(\epsilon)
\]
for a suitable $g_2$. Indeed, since all observables are norm-bounded by $1$, via the triangle inequality we even get the bound in the lemma for all $x,y$.
\end{proof}

In the next Subsection \ref{two qubit strategy} we shall attempt to give an explicit form of $g_2$, to obtain further insight into its dependence on $\epsilon$.

\subsection{Explicit robust two-qubit self-test via sum-of-squares certificate}
\label{two qubit strategy}


In this section, we construct explicit functions bounding the distance between the near-optimal strategy and an optimal strategy. To start, \cite[Lemma~7]{Wooltorton_2024} states that for a two-qubit state and two dichotomic local qubit observables, for every strategy $(\rho',\tilde{A}'^x,\tilde{B}'^y)$ that achieves $\eta \geq \tau(\theta,\phi,\omega) - \epsilon$, there is another strategy $(\rho,\tilde{A}^x,\tilde{B}^y)$ with reduced parameters of the form
\begin{equation}
  \label{eq:reduced two qubit strategy}
  \rho = \sum_{\alpha=0}^{3} \lambda_{\alpha} \proj{\Phi_{\alpha}}, \phantom{====:}
\end{equation}
\begin{equation}
  \label{eq:A-B}
  \begin{aligned}
    \tilde{A}^x &= (\cos a_x)\,\sigma_Z + (\sin a_x)\,\sigma_X, \\
    \tilde{B}^y &= (\cos b_y)\,\sigma_Z + (\sin b_y)\,\sigma_X,
  \end{aligned}
\end{equation}
with the Bell states $\Phi_\alpha$, $\lambda_\alpha\ge0$ and $\sum_\alpha \lambda_\alpha=1$.
Although this holds only for Bell inequalities with uniform marginals, we emphasize that this can be assumed without loss of generality as this can be ensured using public one-way communication between Alice and Bob.
 We briefly recap the proof here. Without loss of generality, each party can rotate their local basis so that
\begin{equation}
\label{eq:A-B-primed}
\begin{aligned}
\tilde{A}'^x &= (\cos a'_x) \sigma_Z + (\sin a'_x) \sigma_X, \\
\tilde{B}'^y &= (\cos b'_y) \sigma_Z + (\sin b'_y) \sigma_X.
\end{aligned}
\end{equation}
As all expressions in Eqs.~\eqref{eq:A-B} and \eqref{eq:A-B-primed} anticommute with $\sigma_Y$, this gives us the freedom to make simultaneous rotations in the $\sigma_Y$ basis on the state $\rho'$ to obtain 
\begin{equation}
  \label{eq: rho bar}
  \bar{\rho}'
    := \frac{1}{2}\left(
        \rho' 
        + (\sigma_Y \otimes \sigma_Y)\, \rho' \, (\sigma_Y \otimes \sigma_Y)
    \right),
\end{equation}
with $\tr\!\left[
        \bar{\rho}'\,\bigl(T_{\theta,\phi,\omega})      \right]
    =
    \tr\!\left[
        \rho'\,\bigl(T_{\theta,\phi,\omega} )
    \right]
    \ge \tau(\theta,\phi,\omega) -\epsilon.$
Likewise, as the observables are all real, we may keep only the real part of $\bar{\rho}'$ to obtain 
\begin{equation}
\label{Eq: rho bar bar}
    \bar{\bar{\rho}}' 
    := \frac{1}{2}\left(
        \bar{\rho}' + \bar{\rho}'^{*}
    \right).
\end{equation}

In \cite{Pironio_2009} and more recently in \cite{Bhavsar_2023} it was shown that there exist  local unitaries of the form 
\begin{equation}
U_X(\upsilon_X)
  = \exp\left(i\frac{\upsilon_X}{2} \sigma_Y\right)
  = \cos\left(\frac{\upsilon_X}{2}\right)\1
    + i\,\sin\left(\frac{\upsilon_X}{2}\right)\sigma_Y,
\end{equation}
for \( X \in \{A,B\} \), such that for suitable choices of parameters \(\upsilon_X\),
\begin{equation}
   (U_A \otimes U_B)\,\bar{\bar{\rho}}'\,(U_A^\dagger \otimes U_B^\dagger) = \rho. 
\end{equation}
We have thus obtained a reduced strategy with measurements in the $X-Z$ plane and the state in the Bell diagonal basis. Further, these unitaries preserve the form of the observables $A'^x$ and $B'^y$, hence we obtain 
\begin{align}
     \tilde{A}^x&= U_{A}\ \tilde{A}'^x \ U_{A}^{\dagger}, \\
     \tilde{B}^y&=  U_{B}\ \tilde{B}'^y \  U_{B}^{\dagger}. 
\end{align}
We now proceed to show that this strategy is robust. 

\begin{lemma}
\label{lemma:robust self-test reduced strategy}
Let $\theta, \phi, \omega \in \mathbb{R}$ satisfy Eq.~\eqref{eq: condition for quantum greater than classical} and assume $\omega$ is restricted to
\begin{equation}
\label{eq:omega range}
\omega \in \left( \frac{\pi}{2} - \theta \bmod \pi
+ n\pi, \; \frac{\pi}{2} + n\pi \right) \quad \text{for some } n \in \mathbb{Z}.
\end{equation}
Consider any reduced strategy composed of states and measurements of the form given in Eq.~\eqref{eq:reduced two qubit strategy} that achieves a Bell score $\eta \geq \tau(\theta,\phi,\omega) - \epsilon$. Then, for any noise parameter $\epsilon > 0$ sufficiently small such that 
\begin{equation}
\left\| R_{\mu}^Q \ket{\Phi_{\alpha}} \right\| > |\sin \theta| \, \mathcal{F}_{B_{\mu}}(\epsilon) + |\cos \chi| \,\mathcal{F}_{A_1}(\epsilon)
\end{equation}
for all $\mu \in \{0,1\}$ and $\alpha \in \{1,2,3\}$ (where $\chi = \phi$ when $\mu=0$ and $\chi = \omega$ when $\mu=1$), and $\sum_{\alpha=1}^3\mathcal{F}_\alpha(\epsilon) < 1$, the following robust self-testing bounds hold up to local unitaries:

\begin{align}
\tilde{A}^0 &= A^{Q0},\\  
\| \tilde{A}^1 - A^{Q1} \| &\le\mathcal{F}_{A_1}(\epsilon), \\
\| \tilde{B}^0 - B^{Q0} \| &\le\mathcal{F}_{B_0}(\epsilon), \\
\| \tilde{B}^1 - B^{Q1} \| &\le \mathcal{F}_{B_1}(\epsilon), \\
\norm{\rho-\proj{\Phi_0}}_1 & \le 2\left(\sum_{\alpha=1}^3 \mathcal{F}_{\alpha}(\epsilon)\right),
\end{align}
where $R_{\mu}^Q $ 
 is the sum-of-squares polynomial of the shifted Bell operator $\bar{T}_{\theta,\phi,\omega}
    := \tau (\theta, \phi,\omega) \1
    - T_{\theta,\phi,\omega}$, evaluated for the optimal quantum observables,
 and   $\mathcal{F}_{A_1}(\epsilon)$, $\mathcal{F}_{B_0}(\epsilon)$, $\mathcal{F}_{B_1}(\epsilon)$, and $\mathcal{F}_\alpha(\epsilon)$ are functions of $\epsilon$ explicitly defined in the proof.
\end{lemma}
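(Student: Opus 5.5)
The plan is to exploit a sum-of-squares (SOS) decomposition of the shifted Bell operator $\bar T_{\theta,\phi,\omega}=\tau\1-T_{\theta,\phi,\omega}$. Following \cite{Le_2023,Wooltorton_2024}, we write it as $\bar T=\sum_{\mu\in\{0,1\}} c_\mu R_\mu^\dagger R_\mu$ with positive coefficients $c_\mu$. The polynomials are
\[
R_0=\sin\theta\,\1\otimes B^0-\bigl(\cos\chi_0\,A^0-\cos(\theta+\chi_0)A^1\bigr)\otimes\1,
\qquad
R_1=\sin\theta\,\1\otimes B^1-\bigl(\cos\chi_1\,A^0-\cos(\theta+\chi_1)A^1\bigr)\otimes\1,
\]
up to normalisation, with $\chi_0=\phi$ and $\chi_1=\omega$. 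The range restriction \eqref{eq:omega range} is what I expect guarantees that these coefficients are positive, so that the SOS is valid. A near-optimal score then gives $\sum_\mu c_\mu\|R_\mu\ket{\varphi}\|^2\le\epsilon$ for a purification $\ket\varphi$, hence $\|R_\mu\ket\varphi\|\le\sqrt{\epsilon/c_\mu}$. Because the strategy is already reduced to the form \eqref{eq:reduced two qubit strategy}, everything becomes explicit in the angles $a_x,b_y$ and the weights $\lambda_\alpha$.

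First, I use the residual freedom of a simultaneous $\sigma_Y$-rotation on both sides, which preserves Bell-diagonality, to set $a_0$ equal to the angle of $A^{Q0}$; this gives $\tilde A^0=A^{Q0}$ exactly. Next, I evaluate the Bell value in closed form. For Bell-diagonal $\rho$ and observables in the $X$--$Z$ plane, $\langle \tilde A^x\otimes\tilde B^y\rangle=\lambda_+\cos(a_x-b_y)+\lambda_-\cos(a_x+b_y)$, with $\lambda_\pm$ suitable combinations of the $\lambda_\alpha$. The condition $\eta\ge\tau-\epsilon$ then says that a trigonometric function of $(a_1,b_0,b_1,\lambda)$ lies within $\epsilon$ of its maximum, which is attained at the ideal angles and at $\lambda=(1,0,0,0)$.

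To get explicit bounds on the measurements, I note that $\|R_\mu\ket\varphi\|$ small means $\tilde B^\mu$ acting on Bob's side approximately equals $(\sin\theta)^{-1}(\cos\chi\,\tilde A^0-\cos(\theta+\chi)\tilde A^1)$ acting on Alice's side, applied to the state. Using the maximally-entangled-like structure (the weight $\lambda_0$ on $\Phi_0$), I transpose Alice's operators to Bob's side. This yields $\|\tilde B^\mu-B^{Q\mu}\|\lesssim|\sin\theta|^{-1}\bigl(\sqrt{\epsilon/c_\mu}+|\cos\chi|\,\|\tilde A^1-A^{Q1}\|\bigr)$, which defines $\mathcal F_{B_\mu}$ in terms of $\mathcal F_{A_1}$. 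I obtain $\mathcal F_{A_1}$ itself from the analogous relation with the roles of the parties exchanged, or by a second-order expansion of the Bell value in $a_1$ around its optimum. Since the optimum is nondegenerate, the Hessian is negative definite, which gives $|a_1-a_1^Q|=O(\sqrt\epsilon)$; explicitly, $\|\tilde A^1-A^{Q1}\|=2|\sin((a_1-a_1^Q)/2)|$.

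For the state, I write $\rho=\sum_\alpha\lambda_\alpha\proj{\Phi_\alpha}$ and compute $\langle\bar T\rangle=\sum_\alpha\lambda_\alpha\sum_\mu c_\mu\|R_\mu\ket{\Phi_\alpha}\|^2$. At the ideal observables $R^Q_\mu\ket{\Phi_0}=0$, while $R^Q_\mu\ket{\Phi_\alpha}\neq0$ for $\alpha\ge1$. Replacing $R_\mu$ by $R^Q_\mu$ costs at most $|\sin\theta|\mathcal F_{B_\mu}+|\cos\chi|\mathcal F_{A_1}$ by the triangle inequality; this is exactly where the hypothesis $\|R^Q_\mu\ket{\Phi_\alpha}\|>|\sin\theta|\mathcal F_{B_\mu}+|\cos\chi|\mathcal F_{A_1}$ enters. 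It gives $\lambda_\alpha\le\mathcal F_\alpha(\epsilon):=\epsilon\big/\sum_\mu c_\mu\bigl(\|R^Q_\mu\ket{\Phi_\alpha}\|-|\sin\theta|\mathcal F_{B_\mu}-|\cos\chi|\mathcal F_{A_1}\bigr)^2$. Then $\|\rho-\proj{\Phi_0}\|_1=2(1-\lambda_0)=2\sum_{\alpha\ge1}\lambda_\alpha$, and the condition $\sum_\alpha\mathcal F_\alpha<1$ ensures consistency, i.e.\ $\lambda_0>0$.

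The main obstacle is the circularity between the measurement bounds and the state bound: transferring Alice's operators to Bob's side needs $\rho$ close to $\Phi_0$, while bounding the $\lambda_\alpha$ needs the observables close to ideal. I would break this by deriving $\mathcal F_{A_1}$ and $\mathcal F_{B_\mu}$ first, directly from the closed-form trigonometric Bell value. That route uses only $\lambda_0\ge$ the threshold implied by $\eta>\beta$, not the SOS, and only afterwards feeds these bounds into the SOS argument for the state.
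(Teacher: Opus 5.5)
Your state bound (replace $R_\mu$ by $R^Q_\mu$ with the triangle inequality, then use $\norm{\rho-\proj{\Phi_0}}_1=2\sum_{\alpha\ge1}\lambda_\alpha$) is the paper's argument. Summing over $\mu$ instead of minimising is fine. The gap is in the measurement bounds: the bound on $\tilde A^1$, on which your $\mathcal F_{B_\mu}$ also depend, is never derived.

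The circularity you set out to break does not actually exist. Since $\rho$ is Bell-diagonal, $\tr\rho\bar T_{\theta,\phi,\omega}=\sum_{\mu,\alpha}c_\mu\lambda_\alpha\norm{R_\mu\ket{\Phi_\alpha}}^2$, and every term is nonnegative. Choose $\Phi_0$ (relabelling by local Paulis) as the Bell state of largest weight, so $\lambda_0\ge 1/4$. Then the $\alpha=0$ term alone gives $\norm{R_\mu\ket{\Phi_0}}\le 2\sqrt{\epsilon/c_\mu}$, with no need for $\rho$ to be close to $\Phi_0$. On $\ket{\Phi_0}$ the transposition is exact. With $a_0=0$, $\norm{R_0\ket{\Phi_0}}$ is the length of the planar vector $\sin\theta\,\vec b_0+\cos(\theta+\phi)\hat z-\cos\phi\,\vec a_1$, where $\vec a_1,\vec b_0$ are the unit Bloch vectors of $\tilde A^1,\tilde B^0$.

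The missing idea is that this nearly closed triangle has sides of fixed lengths $|\sin\theta|$, $|\cos(\theta+\phi)|$, $|\cos\phi|$, and is therefore rigid:
squaring and adding the two components eliminates $a_1$ and gives $|\cos b_0-\sin\phi|=O(\sqrt\epsilon)$, dividing by $\sin\theta\cos(\theta+\phi)\ne0$;
back-substitution then pins $\cos a_1$ near $\cos\theta$;
$\sin b_0$ and $\sin a_1$ follow up to a common sign, which $\sigma_Z\otimes\sigma_Z$ absorbs.
$R_1$ does the same for $b_1$. So $\mathcal F_{A_1}$ comes from the very polynomial that gives $\mathcal F_{B_\mu}$. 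It does not come from a relation ``with the roles of the parties exchanged''; this SOS contains no such relation.

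Your fallbacks do not close the gap. The Hessian argument is only local, and nondegeneracy is asserted rather than shown. The reduced maximiser is not even unique, because of the reflection $a_x,b_y\mapsto-a_x,-b_y$, and the Bell value is linear in $\lambda$. The argument also yields no explicit $\mathcal F$'s valid for all $\epsilon$ below a threshold, which is what the lemma claims. Analysing the closed-form trigonometric Bell value without the SOS leaves exactly the quantitative stability estimate unproved.

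Two smaller points. First, $\eta>\beta$ gives no lower bound on $\lambda_0$ without the relabelling. A pure $\Phi_\alpha$, $\alpha\ne0$, with Bob's observables conjugated by the Pauli that maps $\Phi_0$ to $\Phi_\alpha$, attains $\tau$ with $a_0=0$ and $\lambda_0=0$. Second, your explicit $R_\mu$ have the coefficients of $A^0$ and $A^1$ interchanged. The correct ones are $R_0=\sin\theta\,\tilde B^0+\cos(\theta+\phi)\tilde A^0-\cos\phi\,\tilde A^1$, and likewise $R_1$ with $\omega$. Yours would need $\cos^2(\theta+\phi)=\cos^2\phi$ to reproduce the correlator coefficients of $\bar T_{\theta,\phi,\omega}$.
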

\begin{proof}
We begin by introducing the sum-of-squares (SOS) decomposition for the shifted Bell operator \begin{equation}
    \bar{T}_{\theta,\phi,\omega}
    := \tau (\theta, \phi,\omega) \1
    - T_{\theta,\phi,\omega},
\end{equation} 
as originally given in \cite{Barizien2024custombell} and recently used in \cite{Wooltorton_2024}: namely, for a Bell inequality violation in a two-qubit block satisfying $\eta \geq \tau(\theta,\phi,\omega)-\epsilon$, we have
\begin{equation}
\label{eq:SOS-T-bar}
\tr \rho \bar{T}_{\theta,\phi,\omega}
= \sum_{\mu=0}^{1} \sum_{\alpha=0}^{3} c_\mu \lambda_{\alpha}
\langle \Phi_{\alpha} | R_{\mu}^{\dagger} R_{\mu} | \Phi_{\alpha} \rangle\le \epsilon, 
\end{equation}
where
\begin{align}
\label{eq: sos R}
    R_0 &= (\sin\theta) \tilde{B}^0 
    + \cos(\theta + \phi) \tilde{A}^0 
    - (\cos\phi) \tilde{A}^1, \\
    R_1 &= (\sin\theta) \tilde{B}^1 
    + \cos(\theta + \omega) \tilde{A}^0 
    - (\cos\omega) \tilde{A}^1,
\end{align}
and
\begin{align}
\label{eq: sos c}
    c_0 = -\frac{\cos\omega \, \cos(\theta + \omega)}{2\sin\theta}, \quad  
    c_1 = \frac{\cos\phi \, \cos(\theta + \phi)}{2\sin\theta},
\end{align}
satisfying $c_0c_1>0$.

\begin{lemma}
\label{lemma:rank sos} 
For $ R_0,R_1,c_0,c_1$ as defined in Eqs.~\eqref{eq: sos R} and \eqref{eq: sos c} respectively, the rank of $\sum_{\mu}c_{\mu}R_{\mu}^\dagger R_{\mu} $ is greater than or equal to $3$.
\end{lemma}


\begin{proof}
Since each $R_\mu$ is a $4\times4$ matrix, let us denote the eigenvalues of $\sum_{\mu}c_{\mu}R_{\mu}^\dagger R_{\mu} $ as $\nu_1,\nu_2, \nu_3,\nu_4,$ with the ordering $\nu_1\ge\nu_2\ge \nu_3\ge\nu_4\ge0$. If all of them are non-zero, we have that Rank($\sum_{\mu}c_{\mu}R_{\mu}^\dagger R_{\mu} $)=4. Next we show that at most one of the $\nu_i'$s can be zero. Let us set $\nu_4=0$. If $\nu_3=0$, then consider the subspace spanned by eigenvectors corresponding to $\nu_4$ and $\nu_3$. Since this is a two-dimensional subspace in $\mathbb{C}^2\otimes \mathbb{C}^2$, there exists a product state vector $\ket{\psi} = \ket{\psi_1}_A\otimes \ket{\psi_2}_B$  that belongs to the subspace \cite{Parthasarathy2004}. Since we defined 
$\sum_{\mu}c_{\mu}R_{\mu}^\dagger R_{\mu} =\bar{T}_{\theta,\phi,\omega}$, a zero eigenvalue implies $\langle T_{\theta,\phi,\omega}\rangle_{\psi}=\tau > \beta$, which however is not possible since $\psi$ is a separable state. Hence, we have $\nu_3>0$.
\end{proof}

As $\lambda_\alpha \ge 0$ and $\Vert{} R_{\mu} \vert{} \Phi_{\alpha} \rangle \Vert{}^2 \ge 0$, every term in the sum-of-squares expansion \eqref{eq:SOS-T-bar} is non-negative.
Hence, we can bound it as follows: 
\[
  c_{\mu}\lambda_{\alpha} \, \| R_{\mu} | \Phi_{\alpha} \rangle \|^2 \leq \epsilon
  \text{ for all } \mu, \alpha.
\]
Further, 
 Eq. \eqref{eq:omega range} along with condition \eqref{eq: condition for quantum greater than classical} ensures that $c_{\mu}>0$. This gives us
\begin{equation} 
\label{sos robust bound}
  \lambda_{\alpha}  \| R_{\mu} | \Phi_{\alpha} \rangle \|^2 \leq\frac{ \epsilon}{c_{\mu} }.
\end{equation}
Without loss of generality we can choose $\tilde{A}^0 = A^{Q0}= \sigma_Z$, which implies $a_0=0$. This can be done by rotating $\tilde{A}^0 $ by an angle $-\zeta$ using the unitary $T_A = R(\zeta) = \cos\!\left(\frac{\zeta}{2}\right)\1 
- i \sin\!\left(\frac{\zeta}{2}\right)\sigma_Y$. Then, $T_B = R(\zeta)$ and $T = T_A \otimes T_B$.
We can also assume that $\lambda_0$ is the largest eigenvalue of the $4 \times 4$ Bell diagonal state, which strictly implies $\lambda_0 \geq \frac{1}{4}$. Applying this lower bound to Eq.~\eqref{sos robust bound}, we get
\begin{equation}
    \left\| T R_{\mu} T^{\dagger} T\, | \Phi_{0} \rangle \right\| \leq \sqrt{\frac{ 4\epsilon}{c_{\mu} }} := \delta_{\mu}.
\end{equation}



For the choice $\mu = 0$, and for the reduced strategy in Eq. \eqref{eq:reduced two qubit strategy}, we obtain
\begin{equation}
    \left\| T R_{0} T^{\dagger} T \ket{\Phi_{0}} \right\|^2
    =
    \bigl( \sin\theta \, \cos b_0 
    - \cos\phi \, \cos a_1 
    + \cos(\theta + \phi) \bigr)^2
    +
    \bigl( \sin\theta \, \sin b_0 
    - \cos\phi \, \sin a_1 \bigr)^2.
\end{equation}

 Since both the terms are positive, we can obtain the following bounds,

\begin{align}
\label{eq: norm bounds for each term}
     \bigl( \sin\theta \, \cos b_0 
    - \cos\phi \, \cos a_1 
    + \cos(\theta + \phi) \bigr)^2\le & \ \delta_0^2 \\
    \bigl( \sin\theta \, \sin b_0 
    - \cos\phi \, \sin a_1 \bigr)^2 \le &\ \delta_0^2,
\end{align}

Let us define $A := \sin\theta \sin b_0$, and $B := \sin\theta \cos b_0 + \cos(\theta+\phi)$. Then, there exist errors \(e_1,e_2\) satisfying $|e_1|,\;|e_2|\le\delta_0$, such that
\begin{equation}
\label{Eq:relation 1}
    \cos\phi\sin a_1=A+e_1,
\end{equation}
and
\begin{equation}
    \cos\phi\cos a_1=B+e_2.
\end{equation}
Squaring and adding the two identities gives
\begin{align}
\cos^2\phi\sin^2 a_1+\cos^2\phi\cos^2 a_1
&=(A+e_1)^2+(B+e_2)^2 \notag\\
\cos^2\phi
&=(A+e_1)^2+(B+e_2)^2 \notag\\
&=A^2+B^2+2Ae_1+2Be_2+e_1^2+e_2^2 \notag\\
\cos^2\phi-(A^2+B^2)
&=2Ae_1+2Be_2+e_1^2+e_2^2.
\end{align}
Taking absolute values yields
\begin{equation}
\label{eq:sin phi bound 1}
\left|
\cos^2\phi-(A^2+B^2)
\right|
\le
2\delta_0(|A|+|B|)
+\delta_0^2,
\end{equation}
where we have used $e_1^2+e_2^2\le\delta_0^2$  to obtain the last term on the r.h.s.
Expanding $A$ and $B$, we get 
\begin{equation}
    A^2+B^2
=
\sin^2\theta
+\cos^2(\theta+\phi)
+2\sin\theta\cos(\theta+\phi)\cos b_0.
\end{equation}

Using
\begin{equation}    
\cos^2\phi
=
\sin^2\theta
+\cos^2(\theta+\phi)
+2\sin\theta\cos(\theta+\phi)\sin\phi,
\end{equation}
we get 
\begin{equation}
 \label{eq:A^2+B^2}
A^2+B^2-\cos^2\phi
=
2\sin\theta\cos(\theta+\phi)
(\cos b_0-\sin\phi).
\end{equation}
Applying Eq. \eqref{eq:A^2+B^2} to inequality \eqref{eq:sin phi bound 1} gives us

\begin{equation}
    \left|
2\sin\theta\cos(\theta+\phi)
(\cos b_0-\sin\phi)
\right|
\le
2\delta_0(|A|+|B|)
+\delta_0^2.
\end{equation}

We recall that the quantum bound is given by $\tau(\theta,\phi,\omega) = |\sin\theta \,\sin(\omega - \phi)\,\sin(\theta + \omega + \phi)|$, whenever 
\begin{equation}
\label{eq:negativity bound}
\cos(\theta + \phi)\,\cos\phi\,\cos(\theta + \omega)\,\cos\omega < 0,
\end{equation}
From these two relations, we can be certain that  $\sin \theta \cos(\theta + \phi)$ is non-zero and the bound in Eq.~\eqref{eq:sin phi bound 1} is valid for all values of $\theta$ and $\phi$. 

Next, 
\begin{equation}
|\cos b_0 - \sin\phi| 
\le 
\frac{
  2\delta_0 \left( |\sin\theta| ( |\sin b_0| + |\cos b_0| ) + |\cos(\theta + \phi)| \right) + \delta_0^2
}{
  |2\sin\theta \cos(\theta + \phi)|
}.
\end{equation}
%
%
By setting $|\sin b_0| + |\cos b_0|$ to its maximum value $\sqrt{2}$, we get 
\begin{equation} 
\label{eq:cos bo bound}
|\cos b_0 - \sin\phi| \le \frac{
2\delta_0 \left( \sqrt{2}|\sin\theta| \, +| \cos(\theta + \phi)| \right)
+ \delta_0^2}{|2 \sin\theta \,\cos(\theta + \phi)|}=:f^{\cos}_{ b_0}(\delta_0).
\end{equation}

Now we consider $\cos a_1$: recall that
\begin{align}
  \cos\phi\cos a_1
    &= B+e_2\\
    &= \cos \theta \cos \phi + \sin \theta (\cos b_0 - \sin \phi)+e_2\\
  \cos \phi(\cos a_1-\cos \theta)
    &=\sin \theta (\cos b_0 - \sin \phi)+e_2.
\end{align}
Taking absolute values and by recalling that $\cos \phi \neq 0$ from Eq.~\eqref{eq:negativity bound}, we get
\begin{equation}
    | \cos a_1-\cos \theta| \leq  \frac{|\sin \theta |f^{\cos}_{ b_0}(\delta_0)+ \delta_0}{|\cos \phi|}=:f^{\cos }_{a_1}(\delta_0).
\end{equation}


Now we obtain a bound on $\sin b_0$. 
Let $e_3 = \cos b_0 - \sin \phi$, where $|e_3| \le f^{\cos}_{ b_0}(\delta_0)$. Expressing $\cos b_0$ in terms of $e_3$:
\begin{equation}
    \cos b_0 = \sin \phi + e_3.
\end{equation}

Squaring both sides, we get
\begin{align}
    1 - \sin^2 b_0 &= \sin^2 \phi + 2e_3 \sin \phi + e_3^2,\\
    \cos^2 \phi - \sin^2 b_0 &= 2e_3 \sin \phi + e_3^2.
\end{align}

Taking absolute values gives us
\begin{equation}
    |\cos^2 \phi - \sin^2 b_0| \le f^{\cos }_{b_0}(\delta_0)^2 + 2|\sin \phi| f^{\cos }_{b_0}(\delta_0).
\end{equation}

For any non-negative real numbers $a, b$, the following inequality holds:
\begin{equation}
    |\sqrt{a} - \sqrt{b}| \leq \sqrt{|a - b|}.
\end{equation}
Using this, the l.h.s. simplifies to the difference of their absolute values:
\begin{equation}
    \bigg\lvert |\cos \phi| - |\sin b_0| \bigg\rvert \le \sqrt{f^{\cos }_{b_0}(\delta_0)^2 + 2|\sin \phi| f^{\cos }_{b_0}(\delta_0)}=: f _{b_0}^{\sin } (\delta_0).
\end{equation}
or equivalently,
\begin{equation}
\label{eq:sin bo bound}
  \vert{} (-1)^t \cos \phi-\sin b_0 \vert{} \le   f _{b_0}^{\sin } (\delta_0),  
\end{equation}
for some $t \in \{0, 1\}.$  The bound when $t=0$ is equivalent to the bound when $t=1$ up to the local unitary $\sigma_Z \otimes \sigma_Z$, which takes $\ket{\Phi_\alpha}\rightarrow\ket{\Phi_\alpha}$ for $\alpha \in \{0,1\} $ and $\ket{\Phi_\alpha}\rightarrow-\ket{\Phi_\alpha}$ for $\alpha \in \{2,3\} $, thus maintaining the form of the reduced strategy given in Eq. \eqref{eq:reduced two qubit strategy}. This holds for all the following instances where the parameter $t$ appears.


Similarly, let $e_4 = 
\sin b_0 - (-1)^t \cos \phi$, with 
$    |e_4| \le f _{b_0}^{\sin } (\delta_0)$.
Substituting it into Eq.~\eqref{Eq:relation 1} yields
\begin{align}
    \cos \phi \sin a_1 &= \sin \theta \left( (-1)^t \cos \phi + e_4 \right) + e_1 \nonumber \\
     \cos \phi \left( \sin a_1 - (-1)^t \sin \theta \right) &= e_4 \ \sin \theta  + e_1.
\end{align}

Taking absolute values and applying the triangle inequality to the r.h.s. gives
\begin{equation}
    \left| \sin a_1 - (-1)^t \sin \theta \right| \le \frac{|\sin \theta| f _{b_0}^{\sin } (\delta_0) + \delta_0}{|\cos \phi|}=:f^{\sin }_{a_1}(\delta_0).
\end{equation}



Having obtained all the bounds chaining from $R_0$, we can focus on $R_1$. The inequality obtained is 

\begin{equation}
     \left\| T R_{1} T^{\dagger} T \, | \Phi_{0} \rangle \right\|^2
=
\bigl( \sin\theta \, \cos b_1 - \cos\omega \, \cos a_1 + \cos(\theta + \omega) \bigr)^2
+ \bigl( \sin\theta \, \sin b_1 - \cos\omega \, \sin a_1 \bigr)^2\le \delta_1^2,
\end{equation}
which leads us to 
\begin{equation} 
\label{cos b1 bound}
|\cos b_1 - \sin\omega| \le \frac{
2\delta_1 \left( \sqrt{2}|\sin\theta| \, +| \cos(\theta + \omega)| \right)
+ \delta_1^2}{|2 \sin\theta \,\cos(\theta + \omega)|}=:f^{\cos}_{ b_1}(\delta_1)
\end{equation}
and 
\begin{equation}
\vert{}\sin b_1 - (-1)^t \cos \omega\vert{} \le \sqrt{f^{\cos}_{ b_1}(\delta_1) ^2 + 2\vert{}\sin \omega\vert{} f^{\cos}_{ b_1}(\delta_1)}=: f^{\sin}_{ b_1}(\delta_1).
\end{equation}

Similarly, we obtain another bound for $a_1$ in terms of  $\omega$:
\begin{equation}
      |\cos \theta - \cos a_1| \le \frac{|\sin  \theta| f_{ b_1}^{\cos} (\delta_1)+ \delta_1}{|\cos \omega|}=:g_{ a_1}^{\cos}(\delta_1).
\end{equation}
and 
\begin{equation}
  \left| \sin a_1 - (-1)^t \sin \theta \right| \le \frac{|\sin \theta| f _{b_1}^{\sin } (\delta_1) + \delta_1}{|\cos \omega|}=:g_{ a_1}^{\sin}(\delta_1).
\end{equation} 

Since the strategies for $t=0$ and $t=1$ are equivalent up to the local unitary $\sigma_Z\otimes \sigma_Z$, we may fix $t=0$ for the remainder of the proof without loss of generality.
From these inequalities, we obtain the bound on the operators:
\begin{align}
  \tilde{A}^0 &= A^{Q0}, \\  
  \| \tilde{A}^1 - A^{Q1} \| &\leq \min\left\{
    \sqrt{(f^{\sin}_{a_1})^2 + (f^{\cos}_{a_1})^2},
    \sqrt{(g^{\sin}_{a_1})^2 + (g^{\cos}_{a_1})^2} \right\} =: \mathcal{F}_{A_1}(\epsilon), \\
  \| \tilde{B}^0 - B^{Q0} \| &\leq \sqrt{(f^{\sin}_{ b_0})^2 +( f^{\cos}_{ b_0})^2} =: \mathcal{F}_{B_0}(\epsilon), \\
  \| \tilde{B}^1 - B^{Q1} \| &\leq \sqrt{(f^{\sin}_{ b_1})^2 + (f^{\cos}_{ b_1})^2} =: \mathcal{F}_{B_1}(\epsilon).
\end{align}
  
Next, we derive a bound for the $\lambda_i$'s. Let us denote the SOS term corresponding to the optimal strategy as $\| R_{0}^Q \, | \Phi_{3} \rangle \|^2$. Then we have 


\begin{equation}
\begin{aligned}
\left\| T R_{0}^Q T^\dagger T | \Phi_{3} \rangle \right\| 
&= \Big\| T \Big(
    \sin \theta \, (B^{Q0} - \tilde{B}^0) 
    - \cos \phi \, (A^{Q1} - \tilde{A}^1) \\
&\phantom{====} 
    + \sin \theta \, \tilde{B}^0 
    + \cos(\theta + \phi) \, \tilde{A}^0 
    - \cos \phi \, \tilde{A}^1
\Big) T^\dagger T| \Phi_{3} \rangle \Big\| \\
&\le |\sin \theta| \, \left\| T (B^{Q0} - \tilde{B}^0)T^\dagger T | \Phi_{3} \rangle \right\| \\  &\quad\quad \quad
    + |\cos \phi| \, \left\| T (A^{Q1} - \tilde{A}^1) T^\dagger T| \Phi_{3} \rangle \right\| 
    + \left\| T R_0 T^\dagger T | \Phi_{3} \rangle \right\| \\
&\le |\sin \theta| \, \mathcal{F}_{B_0}(\epsilon)
    + |\cos \phi| \, \mathcal{F}_{A_1}(\epsilon)
    + \left\| T R_0 T^\dagger T | \Phi_{3} \rangle \right\|
\end{aligned}
\end{equation}

with equality whenever $\epsilon=0$. 
Using the inequality
$
c_{0}\lambda_{3} \,\left\| T R_{0} T^\dagger T | \Phi_{3} \rangle \right\| ^2 \leq \epsilon$, and for a sufficiently small $\epsilon$ such that 
$\left\| T R_{0}^Q T^\dagger T | \Phi_{3} \rangle \right\| - |\sin \theta| \, \mathcal{F}_{B_0}(\epsilon) - |\cos \phi| \, \mathcal{F}_{A_1}(\epsilon)> 0$, 
squaring both sides yields:                                                                                                                                


\begin{equation}
\lambda_{3} \leq 
\frac{\epsilon}{c_{0}\,\left\| R_{0} | \Phi_{3} \rangle \right\| ^2}
\leq 
\frac{\epsilon}{
c_{0} \left(
2 |\sin \theta|
- |\sin \theta| \, \mathcal{F}_{B_0}(\epsilon)
- |\cos \phi| \, \mathcal{F}_{A_1}(\epsilon)
\right)^2
}.
\end{equation}

   where we have used
  $\left\| T R_{0}^Q T^\dagger T | \Phi_{3} \rangle \right\| ^2 = 4 \sin^2 \theta$.
Similarly, we obtain the following family of bounds:
\begin{equation}
    \left\|T R_{\mu}^Q T^\dagger T\, \ket{\Phi_{\alpha}} \right\|
    \leq 
    |\sin \theta| \, \mathcal{F}_{B_{\mu}}(\epsilon)
    + |\cos \chi| \, \mathcal{F}_{A_1}(\epsilon)
    + \left\|T R_{\mu} \,T^\dagger T \ket{\Phi_{\alpha}} \right\|
\end{equation}
whenever $\alpha \in \{1,2,3\}$ with $\chi=\phi \ (\omega)$ when $\mu=0 \ (1)$.
Moreover,
\begin{equation}
    \lambda_{\alpha} 
    \leq 
    \frac{\epsilon}{c_{\mu}\,\left\| TR_{\mu} T^\dagger T\, \ket{\Phi_{\alpha}} \right\|^2}
    \leq 
    \frac{\epsilon}{
        c_{\mu} \left(
            \left\| T R_{\mu}^Q T^\dagger T \, \ket{\Phi_{\alpha}} \right\|
            - |\sin \theta| \, \mathcal{F}_{B_{\mu}}(\epsilon)
            - |\cos \chi| \, \mathcal{F}_{A_1}(\epsilon)
        \right)^2
    }.
\end{equation}

From Lemma $\ref{lemma:rank sos}$, we know that at least one of $ c_{\mu}\,\left\|T R_{\mu}T^\dagger T  \, \ket{\Phi_{\alpha}} \right\|$ is non-zero. 
For each $\alpha$, we take the non-zero value of $c_{\mu}\,\left\| TR_{\mu}T^\dagger T \, \ket{\Phi_{\alpha}} 
\right\|$ or the  tighter of the two bounds if both are positive, to obtain
\begin{equation}
    \lambda_{\alpha}
    \leq 
    \mathcal{F}_{\alpha}(\epsilon)
    :=
\min_{\substack{\mu \in \{0,1\} \\ c_{\mu}\,\|T R_{\mu}T^\dagger T\ket{\Phi_{\alpha}} \| > 0}}
    \left(
        \frac{\epsilon}{
            c_{\mu} \left(
                \left\|T R_{\mu}^Q \, T^\dagger T\ket{\Phi_{\alpha}} \right\|
                - |\sin \theta| \, \mathcal{F}_{B_{\mu}}(\epsilon)
                - |\cos \chi| \,\mathcal{F}_{A_1}(\epsilon)
            \right)^2
        }
    \right).
\end{equation}

This subsequently yields the bound
\begin{equation}
    \lambda_0 \ge 1 - \sum_{\alpha=1}^{3} \mathcal{F}_{\alpha}(\epsilon).
\end{equation}

This bound is non-trivial for a small enough $\epsilon$ such that $\sum_{\alpha=1}^{3} \mathcal{F}_{\alpha} < 1$. 
The trace norm between $\rho$ and $\Phi_0$ is thus obtained as $\norm{\rho-\proj{\Phi_0}}_1 \leq 2\left(\sum_{\alpha=1}^3 \mathcal{F}_{\alpha}(\epsilon)\right)$. This completes the proof. 
\end{proof}

We have shown a robust self-test for the strategies given by Eq.~\eqref{eq:reduced two qubit strategy}. 
We now show that this robust self strategy is unique for all two-qubit strategies up to local unitaries. 
   
\begin{lemma}
\label{lemma: two qubit general strategy}
Suppose every reduced strategy of the form Eq.~\eqref{eq:reduced two qubit strategy} that achieves 
$\tr(\rho T_{\theta,\phi,\omega})\ge \tau(\theta,\phi,\omega)- \epsilon$ is bounded at a distance from the optimal strategy up to local unitaries that maintain the form of  Eq.~\eqref{eq:reduced two qubit strategy}, for a valid range of $\epsilon$ and the parameters $(\theta, \phi, \omega)$, as shown in Lemma \ref{lemma:robust self-test reduced strategy}. Then,
up to local unitaries, every two-qubit strategy $(\rho',\tilde{A}'^x,\tilde{B}'^y)$ that achieves $\tr(\rho' T_{\theta,\phi,\omega})\ge \tau(\theta,\phi,\omega)- \epsilon$ is bounded by 

\begin{align}
\tilde{A}'^0 &= A^{Q0},\\  
\| \tilde{A}'^1 - A^{Q1} \| &\le \mathcal{F}_{A_1}(\epsilon), \\
\| \tilde{B}'^0 - B^{Q0} \| &\le \mathcal{F}_{B_0}(\epsilon), \\
\| \tilde{B}'^1 - B^{Q1} \| &\le
\mathcal{F}_{B_1}(\epsilon),\\
   \norm{\rho'-\proj{\Phi_0}}_1 
   & \le 4 \sqrt{  \sum_{\alpha=1}^3  \mathcal{F}_{\alpha}(\epsilon)},
\end{align}
 and the bound is non-trivial whenever $\epsilon$ is sufficiently small such that $\sum_{\alpha=1}^3 \mathcal{F}_{\alpha}(\epsilon) < \frac{1}{4}.$
\end{lemma}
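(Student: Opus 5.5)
The plan is to retrace the reduction of \cite[Lemma~7]{Wooltorton_2024} recapped above, from the general two-qubit strategy $(\rho',\tilde{A}'^x,\tilde{B}'^y)$ to the reduced one $(\rho,\tilde{A}^x,\tilde{B}^y)$, and to check which steps are unitary and which are not. First, local basis rotations bring all four observables into the $X$--$Z$ plane, as in Eq.~\eqref{eq:A-B-primed}. These are local unitaries, so they preserve both the Bell score and all distances. Next come the $\sigma_Y\otimes\sigma_Y$ twirl \eqref{eq: rho bar} and the real-part projection \eqref{Eq: rho bar bar}. Neither is unitary, but both act only on the state and leave the observables untouched, because the observables anticommute with $\sigma_Y$ and are real. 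The last step applies the $\sigma_Y$-rotations $U_A\otimes U_B$ and the rotation $T$ that fixes $\tilde{A}^0=\sigma_Z$; these are local unitaries acting on state and observables alike. Hence the observables of $(\rho',\tilde{A}'^x,\tilde{B}'^y)$, after the appropriate local unitaries, coincide with those of the reduced strategy. The observable bounds and $\tilde{A}'^0=A^{Q0}$ then follow directly from Lemma~\ref{lemma:robust self-test reduced strategy}, with the same functions $\mathcal{F}_{A_1},\mathcal{F}_{B_0},\mathcal{F}_{B_1}$.

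For the state, set $\varepsilon:=\sum_{\alpha=1}^3\mathcal{F}_\alpha(\epsilon)$ and let $\sigma$ denote $\rho'$ after the same local unitaries. The reduced state is $\rho = \tfrac14\sum_{k} W_k\sigma W_k^\dagger$, where the $W_k$ range over $\{\1,\sigma_Y\otimes\sigma_Y\}$ composed with complex conjugation. Conjugation is antiunitary, but it fixes $\ket{\Phi_0}$ up to phase, and so does $\sigma_Y\otimes\sigma_Y$. Therefore $\bra{\Phi_0}\rho\ket{\Phi_0}=\bra{\Phi_0}\sigma\ket{\Phi_0}$: the fidelity with $\Phi_0$ is invariant under the twirl. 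I will confirm this by direct computation, writing $\ket{\Phi_0}$ in the real Bell basis adapted to the $X$--$Z$ frame. Lemma~\ref{lemma:robust self-test reduced strategy} gives $\lambda_0\ge 1-\varepsilon$, so $F:=\bra{\Phi_0}\sigma\ket{\Phi_0}\ge 1-\varepsilon$.

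Fuchs--van de Graaf then gives $\tfrac12\|\sigma-\proj{\Phi_0}\|_1\le\sqrt{1-F}\le\sqrt{\varepsilon}$, so $\|\sigma-\proj{\Phi_0}\|_1\le 2\sqrt{\varepsilon}$. This is within the claimed $4\sqrt{\varepsilon}$, and the extra factor $2$ leaves slack. If the invariance of the overlap fails and I only control the twirled state, I would fall back on a gentle-measurement type argument instead. The stated nontriviality condition $\varepsilon<\tfrac14$ is consistent with requiring $4\sqrt{\varepsilon}<2$.

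I expect the main obstacle to be the antiunitary step. I need to check carefully that $\ket{\Phi_0}$ is real and a $+1$ (or phase) eigenvector of $\sigma_Y\otimes\sigma_Y$ in the frame used. If $\Phi_0$ is not fixed, a cruder triangle-inequality estimate would be needed: bound the distance of $\sigma$ from each symmetrised copy, which would plausibly account for the factor $4$.
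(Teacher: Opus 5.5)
Your proof is correct. For the observables it is the same as the paper's: the twirl and real-part steps act only on the state and preserve the Bell score, so after the local unitaries the observables are exactly those of the reduced strategy, and Lemma~\ref{lemma:robust self-test reduced strategy} applies verbatim.

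For the state you take a different and sharper route. The paper turns $\|\rho-\proj{\Phi_0}\|_1\le 2\varepsilon$ into $1-F(\bar{\bar\rho}',\proj{\hat\Phi_0})\le\varepsilon$. It then uses linearity of $\bra{\hat\Phi_0}\cdot\ket{\hat\Phi_0}$ to write this as an average over the symmetrised copies. Finally it discards the nonnegative terms for $\bar{\rho}'^{*}$ and for $(\sigma_Y\ox\sigma_Y)\rho'(\sigma_Y\ox\sigma_Y)$, losing a factor $2$ at each step. This gives $1-F(\rho',\proj{\hat\Phi_0})\le 4\varepsilon$ and, via Fuchs--van de Graaf, the constant $4\sqrt{\varepsilon}$. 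So the ``cruder estimate'' you anticipate is exactly this linearity-and-discard step, which uses only the purity of the target.

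You instead observe that the discarded terms equal the retained one. In the $X$--$Z$ frame, the paper's formula for $\|TR_0T^\dagger T\ket{\Phi_0}\|^2$ pins down $\ket{\Phi_0}=(\ket{00}+\ket{11})/\sqrt2$. This vector is real and satisfies $(\sigma_Y\ox\sigma_Y)\ket{\Phi_0}=-\ket{\Phi_0}$, so the overlap is twirl-invariant. Hence $1-F=1-\lambda_0\le\varepsilon$ and $\|\sigma-\proj{\Phi_0}\|_1\le2\sqrt{\varepsilon}$, which is nontrivial already for $\varepsilon<1$.

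One point to state explicitly: the unitaries applied after the twirl must commute with it. These are the $\sigma_Y$-rotations $U_A\ox U_B$ and $T$, and the $\sigma_Z\ox\sigma_Z$ that fixes $t$. All are real and commute with $\sigma_Y\ox\sigma_Y$, so your identity $\rho=\tfrac14\sum_k W_k\sigma W_k^\dagger$ holds. Equivalently, the pulled-back target $\hat\Phi_0$ is again a real Bell state.

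The trade-off is this. The paper's argument needs no property of the target and would work for any pure $\hat\Phi_0$. Yours exploits the symmetry of $\Phi_0$ to save a factor $2$, and that improvement would also carry through to a smaller $g_2$ in the subsequent theorem.
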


\begin{proof}
We have that  $\norm{\rho-\proj{\Phi_0}}_1 \le 2\left(\sum_{\alpha=1}^3 \mathcal{F}_{\alpha}(\epsilon)\right)$. 
Following the construction given in Subsection \ref{two qubit strategy} and \cite[Lemma~7]{Wooltorton_2024} backwards, we see that there exist local unitaries $U_{A}$ and $U_{B}$ giving us 
\begin{align}
    \bar{\bar{\rho}}' 
    &= (U_{A} \otimes U_{B})\, \rho \, (U_{A}^{\dagger} \otimes U_{B}^{\dagger}), \\
    {\proj{\hat\Phi_0}} 
    &= (U_{A} \otimes U_{B})\, \proj{\Phi_0} \, (U_{A} ^{\dagger}\otimes U_{B}^{\dagger}).
\end{align}
From this we obtain,
\begin{subequations}
\label{eq:chain}
\begin{align}
        2\left(\sum_{\alpha=1}^3 \mathcal{F}_{\alpha}(\epsilon)\right) & \ge   \norm{\bar{\bar{\rho}}'-\proj{\hat{\Phi}_0}}_1= \norm{\frac{\bar{\rho}' + \bar{\rho}'^{*}}{2}-\proj{\hat{\Phi}_0}}_1,
    \label{eq:chain_a} \\
     \quad
    \sum_{\alpha=1}^3 \mathcal{F}_{\alpha}(\epsilon) &\ge    1 - F\!\left( \frac{\bar{\rho}' + \bar{\rho}'^{*}}{2}, \proj{\hat{\Phi}_0} \right),
    \label{eq:chain_c} \\
   &=    \frac{1}{2}\bigl(1 - F(\bar{\rho}', \proj{\hat{\Phi}_0})\bigr)
    + \frac{1}{2}\bigl(1 - F(\bar{\rho}'^{*}, \proj{\hat{\Phi}_0})\bigr),
    \label{eq:chain_d} \\
   &\ge    \frac{1}{2}\bigl(1 - F(\bar{\rho}', \proj{\hat{\Phi}_0})\bigr),
     \label{eq:chain_e} \\
   \quad
     &\ge    \frac{1}{4}\bigl(1 - F(\rho', \proj{\hat{\Phi}_0})\bigr).
    \label{eq:chain_f}
\end{align}
\end{subequations}
Here, the inequality~\eqref{eq:chain_a} follows from Eq.~\eqref{Eq: rho bar bar}. 
Inequality \eqref{eq:chain_c} is obtained by  rewriting the trace distance bound in terms of fidelity, defined as $F=\bra{\hat{\Phi}_0}\rho\ket{\hat{\Phi}_0}$, and using the bound $2\left(1 - F(\rho, \vert{}\hat{\Phi}\rangle\langle\hat{\Phi}\vert{})\right) \le  \Vert{}\rho - \vert{}\hat{\Phi}\rangle\langle\hat{\Phi}\vert{}\Vert{}_1$. Then we obtain \eqref{eq:chain_d} using the linearity of fidelity when one of the states is pure and the bound in \eqref{eq:chain_e} follows from the fact that fidelity is at most one. 
Finally, we get \eqref{eq:chain_f} using Eq.~\eqref{eq: rho bar}.

Rewriting in terms of the trace norm and using the Fuchs–van de Graaf inequalities, we obtain, up to local unitaries,
\begin{equation}
  \norm{\rho'-\proj{\hat{\Phi}_0}}_1 
    \le4 \sqrt{  \sum_{\alpha=1}^3  \mathcal{F}_{\alpha}(\epsilon)}.
\end{equation}
 This bound is non-trivial whenever $\epsilon$ is sufficiently small such that $\sum_{\alpha=1}^3 \mathcal{F}_{\alpha}(\epsilon)< \frac{1}{4}.$

To establish the bounds on the observables, recall that the reduced strategy's observables $\tilde{A}^x$ and $\tilde{B}^y$ are constructed by rotating the original measurements using the same local unitaries $U_A$ and $U_B$. From Lemma \ref{lemma:robust self-test reduced strategy}, we possess the bounds on these reduced observables which are invariant under local unitaries. Hence, we  apply the inverse unitaries $U_A^\dagger$ and $U_B^\dagger$ to obtain the following bounds up to local unitaries:
\begin{align}
    \tilde{A}'^0 &= U_A^\dagger \tilde{A}^0 U_A = A^{Q0}, \\
    \| \tilde{A}'^1 - A^{Q1} \| &= \| U_A^\dagger (\tilde{A}^1 - A^{Q1}) U_A \| \le \mathcal{F}_{A_1}(\epsilon), \\
    \| \tilde{B}'^0 - B^{Q0} \| &= \| U_B^\dagger (\tilde{B}^0 - B^{Q0}) U_B \|  \le \mathcal{F}_{B_0}(\epsilon), \\
    \| \tilde{B}'^1 - B^{Q1} \| &= \| U_B^\dagger (\tilde{B}^1 - B^{Q1}) U_B \|  \le \mathcal{F}_{B_1}(\epsilon).
\end{align}
This bounds the entire strategy $(\rho',\tilde{A}'^x,\tilde{B}'^y)$ and completes the proof.

\end{proof}



Combining Lemmas \ref{lemma:robust self-test reduced strategy} and \ref{lemma: two qubit general strategy}, we get a two-qubit robust self-test bound.

\begin{theorem}
Let $\tilde{A}^x$, $\tilde{B}^y$ be local qubit observables acting on a tripartite state $\ket{\tilde{\Psi}}^{\cA\cB\cE} $. Let $\rho$ be the reduced state of Alice and Bob  such that $\tr(\rho T_{\theta,\phi,\omega})\ge \tau(\theta,\phi,\omega)- \epsilon$. Then there exist local two-qubit unitaries  $U$ and $V$ such that the following robust self-test bound holds for a valid range of $\epsilon$ and the parameters $(\theta, \phi, \omega)$, as required in Lemma \ref{lemma: two qubit general strategy}:
\begin{align}
\label{eq:two_qubit_bound}
\bigl\|(U\otimes V\otimes \1_{\cE})
\bigl(\tilde{A}^{x}\otimes &\tilde{B}^{y} \otimes \1_{\cE}\bigr)
\ket{\tilde{\Psi}}^{\cA\cB\cE}- \bigl( A^{Qx}\otimes B^{Qy} \otimes \1_{\cE}\bigr)
\ket{\Phi_0}^{\cA\cB}\otimes \ket{\xi}^{\cE}
\bigr\|_2  \notag\\
&\le
 4 \sqrt{\left(\sum_{\alpha=1}^3 \mathcal{F}_{\alpha}(\epsilon)\right)}
+ \mathcal{F}_{A_x}(\epsilon)+\mathcal{F}_{B_y}(\epsilon)
=: g_2(\epsilon).
\end{align}
\end{theorem}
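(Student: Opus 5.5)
The plan is to combine the state bound and the observable bounds of Lemma~\ref{lemma: two qubit general strategy} through a triangle inequality, after first upgrading the trace-norm bound on $\rho$ to a vector-norm bound on a purification via Uhlmann's theorem. Lemma~\ref{lemma: two qubit general strategy} gives local unitaries (which we call $U,V$) such that, after conjugation, $\|(U\otimes V)\rho(U\otimes V)^\dagger-\proj{\Phi_0}\|_1\le 4\sqrt{s}$ with $s:=\sum_{\alpha=1}^3\mathcal{F}_\alpha(\epsilon)$. It also gives $\|U\tilde A^xU^\dagger-A^{Qx}\|\le\mathcal{F}_{A_x}(\epsilon)$ (with $\mathcal{F}_{A_0}=0$) and $\|V\tilde B^yV^\dagger-B^{Qy}\|\le\mathcal{F}_{B_y}(\epsilon)$.

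\emph{Step 1 (state).} The pure state $\proj{\Phi_0}$ has purification $\ket{\Phi_0}\otimes\ket{\xi}$ for any $\ket{\xi}\in\cE$. The state $(U\otimes V\otimes\1)\ket{\tilde\Psi}$ purifies $(U\otimes V)\rho(U\otimes V)^\dagger$. By Uhlmann's theorem there is a unit vector $\ket{\xi}$ with
\[
\bigl|\langle\Phi_0\otimes\xi|(U\otimes V\otimes\1)|\tilde\Psi\rangle\bigr|^2=\bra{\Phi_0}(U\otimes V)\rho(U\otimes V)^\dagger\ket{\Phi_0}=:F .
\]
Adjusting the phase of $\ket\xi$ makes the overlap real and nonnegative. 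Then
\[
\|(U\otimes V\otimes\1)\ket{\tilde\Psi}-\ket{\Phi_0}\ket\xi\|_2^2=2(1-\sqrt F)\le 2(1-F)\le\|(U\otimes V)\rho(U\otimes V)^\dagger-\proj{\Phi_0}\|_1 .
\]
To land exactly on the constant $4\sqrt{s}$ I will instead reuse the fidelity chain of Lemma~\ref{lemma: two qubit general strategy}, namely $1-F\le 4s$, which gives $\|\cdot\|_2\le\sqrt{8s}\le 4\sqrt{s}$. Either route gives the bound $4\sqrt{s}$.

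\emph{Step 2 (observables).} Write $W=U\otimes V\otimes\1$. Since $W$ is unitary, $W(\tilde A^x\otimes\tilde B^y\otimes\1)\ket{\tilde\Psi}=(A'^x\otimes B'^y\otimes\1)W\ket{\tilde\Psi}$, where $A'^x=U\tilde A^xU^\dagger$ and $B'^y=V\tilde B^yV^\dagger$. Insert and subtract intermediate terms:
\[
A'\otimes B'\,W\tilde\Psi-A^Q\otimes B^Q\,\Phi_0\xi
= A'\otimes B'(W\tilde\Psi-\Phi_0\xi)+(A'-A^Q)\otimes B'\,\Phi_0\xi+A^Q\otimes(B'-B^Q)\,\Phi_0\xi .
\]
All observables have operator norm at most $1$, because they are $\pm1$-valued, and operator norms are multiplicative on tensor products. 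The three terms are therefore bounded by $4\sqrt{s}$, $\mathcal{F}_{A_x}(\epsilon)$ and $\mathcal{F}_{B_y}(\epsilon)$ respectively, and summing them gives $g_2(\epsilon)$.

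\emph{Main obstacle.} The main difficulty is consistency: the unitaries $U,V$ that realise the state bound must be the same ones that realise the observable bounds. This holds because Lemma~\ref{lemma: two qubit general strategy} builds both from one chain of local unitaries: the rotations $U_X(\upsilon_X)$, the rotation $T$, and the optional $\sigma_Z\otimes\sigma_Z$ fixing $t=0$. Each of these also preserves $\ket{\Phi_0}$ up to a phase, and that phase can be absorbed into $\ket\xi$. A second subtlety is that Lemma~\ref{lemma: two qubit general strategy} bounds the distance to $\proj{\hat\Phi_0}$, the image of $\proj{\Phi_0}$ under the reduction unitaries. I will absorb those unitaries into $U,V$ as well, checking that the same absorption is applied to the observable bounds; this is exactly the ``up to local unitaries'' clause in that lemma. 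Finally, the Bell-diagonal reduction uses the twirl by $\sigma_Y\otimes\sigma_Y$ and complex conjugation, which do not act on $\rho'$ itself. That is why I will pass through fidelity, following \eqref{eq:chain}, rather than try to lift trace distances directly.
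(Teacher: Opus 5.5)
Your proposal is correct and follows essentially the paper's argument: Uhlmann's theorem plus the fidelity chain \eqref{eq:chain} for the state term, then a triangle inequality with unit-norm observables for the rest. Your bookkeeping is also tidier than the paper's: you use conjugated observables $U\tilde A^xU^\dagger$ acting on $(U\otimes V)\ket{\tilde\Psi}$, and you read \eqref{eq:chain_f} as $1-F\le 4s$, which gives $\sqrt{8s}\le 4\sqrt{s}$.

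Two side remarks are wrong, though neither matters for the proof. First, the direct trace-norm route gives only $\sqrt{4\sqrt{s}}=2s^{1/4}$, which exceeds $4\sqrt{s}$ for $s<1/16$, so it does not give the same bound. Second, $U_A(\upsilon_A)\otimes U_B(\upsilon_B)$ fixes $\ket{\Phi_0}$ only when $\upsilon_A\equiv\upsilon_B$ modulo $2\pi$, so not every reduction unitary preserves $\ket{\Phi_0}$. These are harmless because you use the fidelity route and absorb all reduction unitaries into $U,V$, and the fidelity chain holds in any fixed local frame by linearity.
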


\begin{proof}
We begin by adding and subtracting a `convenient term' $(U\otimes V\otimes \1_{\cE})\,
(\tilde{A}^{x}\otimes \tilde{B}^{y} \otimes \1_{\cE})
\ket{\Phi_0}^{\cA\cB}\otimes \ket{\xi}^{\cE}$.   Applying the triangle inequality, we get 
\begin{align}
\bigl\|(U\otimes  V\otimes \1_{\cE}) 
 &\bigl( \tilde{A}^{x}\otimes \tilde{B}^{y} \otimes \1_{\cE}\bigr)\ket{\tilde{\Psi}}^{\cA\cB\cE}-\bigl( A^{Qx}\otimes B^{Qy} \otimes \1_{\cE}\bigr)\ket{\Phi_0}^{\cA\cB}\otimes \ket{\xi}^{\cE}\bigr\|_2  \notag \\
 &\leq \bigl\|
(U\!\otimes\! V\!\otimes\! \1_{\cE})
(\tilde{A}^{x}\!\otimes\! \tilde{B}^{y}\!\otimes\! \1_{\cE})
\bigl(\ket{\tilde{\Psi}}^{\cA\cB\cE}
- \ket{\Phi_0}^{\cA\cB}\!\otimes\! \ket{\xi}^{\cE}\bigr)
\bigr\|_2 \nonumber \\
 &\phantom{=}+ \bigl\|
\bigl((U\!\otimes\! V\!\otimes\! \1_{\cE})
(\tilde{A}^{x}\!\otimes\! \tilde{B}^{y}\!\otimes\! \1_{\cE})
- (A^{Qx}\!\otimes\! B^{Qy}\!\otimes\! \1_{\cE})\bigr)
\ket{\Phi_0}^{\cA\cB}\!\otimes\! \ket{\xi}_{E}
\bigr\|_2.
\end{align}

We bound both the terms separately. For the first term, we obtain
\begin{subequations}\label{eq:main}
\begin{align}
\label{eq:main-a}
\text{1st term}& = \bigl\|
(U\!\otimes\! V\!\otimes\! \1_{\cE})
(\tilde{A}^{x}\!\otimes\! \tilde{B}^{y}\!\otimes\! \1_{\cE})
\bigl(\ket{\tilde{\Psi}}^{\cA\cB\cE}
- \ket{\Phi_0}^{\cA\cB}\!\otimes\! \ket{\xi}^{\cE}\bigr)
\bigr\|_2  \\
\label{eq:main-b}
&\leq \Bigl\|
(U\otimes V\otimes \1_{\cE})
(\tilde{A}^{x}\otimes \tilde{B}^{y} \otimes \1_{\cE})
\Bigr\|  \cdot
\Bigl\|
\ket{\tilde{\Psi}}^{\cA\cB\cE}
- \ket{\Phi_0}^{\cA\cB}\otimes \ket{\xi}^{\cE}
\Bigr\|_2 
\end{align}
\end{subequations}


From inequality \eqref{eq:main-a}, we get \eqref{eq:main-b} using the submultiplicativity of the matrix norm. From here, we use the fact that
$U,V$ are unitaries while $\tilde A^x,\tilde B^y$ are $\pm 1$ observables, so that $\left\| (U\otimes V \otimes \1_{\cE})(\tilde A_x \otimes \tilde B_y \otimes \1_{\cE})\right\|= 1$.
From Eq. \eqref{eq:chain_f} in the proof of Lemma \ref{lemma: two qubit general strategy}, we have, 
\begin{equation}
    F\bigl(\rho, \proj{\hat{\Phi}_0}\bigr)
      \ge 1-8 \left(\sum_{\alpha=1}^3 \mathcal{F}_{\alpha}(\epsilon)\right)
\end{equation}

Let $|\tilde\Psi\rangle^{\cA\cB\cE}$ be a purification of $\rho$. By Uhlmann's theorem, there exists a state
$|\xi\rangle^{\cA\cB\cE} $ such that
\begin{equation}
    \left| \left\langle \tilde\Psi^{\cA\cB\cE} \,\middle|\, \Phi_0{}^{\cA\cB}\otimes\xi^{\cE} \right\rangle \right|
= \sqrt{F\!\left(\rho,\proj{\Phi_0}\right)}
\ge 1-8\left(\sum_{\alpha=1}^3 \mathcal{F}_{\alpha}(\epsilon)\right),
\end{equation}
where we used the inequality $\sqrt{1-x} \ge 1-x$ for $x \in [0,1]$. 
 Choosing the phase of $|\xi\rangle^{\cE} $ so that
    $\left|
\left\langle
\tilde\Psi^{\cA\cB\cE} 
\,\middle|\,
\Phi_0^{\cA\cB} \otimes\xi^{\cE} 
\right\rangle
\right|$
is real and nonnegative, we obtain
\begin{align}
\Bigl\|
\ket{\tilde{\Psi}}^{\cA\cB\cE} 
- \ket{\Phi_0}^{\cA\cB} \otimes \ket{\xi}^{\cE} 
\Bigr\|_2 ^2
&=
2-2\,\operatorname{Re}\!\left(
\left\langle
\tilde\Psi^{\cA\cB\cE} 
\,\middle|\,
\Phi^{\cA\cB} \otimes\xi^{\cE} 
\right\rangle
\right) \\
&\le16 \left(\sum_{\alpha=1}^3 \mathcal{F}_{\alpha}(\epsilon)\right).
\end{align}
Therefore,
\begin{equation}
    \text{1st term} \le 4 \sqrt{\left(\sum_{\alpha=1}^3 \mathcal{F}_{\alpha}(\epsilon)\right)}.
\end{equation}

We now focus on the second term:
\begin{subequations}\label{eq:second}
\begin{align}
\label{eq:second-a}
 \text{2nd term} =&\Bigl\|
(U\otimes V\otimes I)
(\tilde{A}^{x}\otimes \tilde{B}^{y} \otimes \1_{\cE})
\bigl(\ket{\Phi_0}^{\cA\cB} \otimes \ket{\xi}^{\cE} \bigr)
\notag \\
&\qquad \qquad \qquad \qquad  -
(A^{Qx}\otimes B^{Qy}\otimes \1_{\cE})
\bigl(\ket{\Phi_0}^{\cA\cB} \otimes \ket{\xi}^{\cE} \bigr)
\Bigr\|_2 \\
\label{eq:second-b}
&\le 
\Bigl\|
\bigl(
(U\otimes V\otimes \1_{\cE})
(\tilde{A}^{x}\otimes \tilde{B}^{y} \otimes \1_{\cE})
-
(A^{Qx}\otimes B^{Qy}\otimes \1_{\cE})
\bigr)\Bigl\| \cdot \Bigr\|
\bigl(\ket{\Phi_0}^{\cA\cB} \otimes \ket{\xi}^{\cE} \bigr)
\Bigr\|_2 \\
\label{eq:second-c}
&\le 
\Bigl\|
(U\otimes V\otimes \1_{\cE})
(\tilde{A}^{x}\otimes \tilde{B}^{y} \otimes \1_{\cE})
-
(A^{Qx}\otimes B^{Qy}\otimes \1_{\cE})
\Bigr\| \\
\label{eq:second-d}
&\le 
\Bigl\|(U\,\tilde{A}_x - A_x^Q)\otimes V\,\tilde{B}_y\Bigr\| +\Bigl\| A_x^Q \otimes (V\,\tilde{B}_y - B_y^Q)\Bigr\| \\
\label{eq:second-e}
& \le \Bigl\|(U\,\tilde{A}_x - A_x^Q)\Bigr\|\Bigl\|V\,\tilde{B}_y\Bigr\| +\Bigl\| A_x^Q \Bigr\| \Bigl\|(V\,\tilde{B}_y - B_y^Q)\Bigr\| \\
\label{eq:second-f}
&  \le \mathcal{F}_{A_x}(\epsilon)+\mathcal{F}_{B_y}(\epsilon).
\end{align}
\end{subequations}
We get inequality \eqref{eq:second-b} using submultiplicativity and inequality \eqref{eq:second-c} since $ \Bigr\|
\bigl(\ket{\Phi_0}^{\cA\cB} \otimes \ket{\xi}^{\cE} \bigr)
\Bigr\|_2 = 1$. From there, we get inequality \eqref{eq:second-e} from triangle inequality and submultiplicativity and finally, we get inequality \eqref{eq:second-f} from Lemma \ref{lemma: two qubit general strategy}.
We can combine both these bounds to obtain 
\begin{align}
\bigl\|(U\otimes  V\otimes \1_{\cE}) \bigl( &\tilde{A}^{x}\otimes \tilde{B}^{y} \otimes \1_{\cE}\bigr)\ket{\tilde{\Psi}}^{\cA\cB\cE}-\bigl( A^{Qx}\otimes B^{Qy} \otimes \1_{\cE}\bigr)\ket{\Phi_0}^{\cA\cB}\otimes \ket{\xi}^{\cE}\bigr\|_2   \le g_2(\epsilon).
\end{align}
This completes the proof. 
\end{proof}


\subsection{Non-zero key rate from near-maximum Bell violation}
\label{subsec:robust-key-rate}
We start by analysing the simpler case when Alice and Bob have qubit systems. 
With $A^x = A^x_{0} - A^x_{1}$ and $B^y = B^y_{0} - B^y_{1}$, where $(A^x_{a})_a$ and $(B^y_{b})_b$ are projective measurements, the approximation bound of Lemma \ref{lemma:two-qubit-robust} translates into a trace distance bound of the actual correlation between Alice, Bob and Eve, 
\[
  \Omega^{xy} := \sum_{a,b} \proj{a}^A \ox \proj{b}^B \ox \tr_{\cA\cB} \Psi (A^x_{a} \ox B^y_{b} \ox \1_{\cE}),
\]
and the ideal one corresponding to attainment of the Tsirelson bound,
\[
  \Omega_0^{xy} := \sum_{a,b} \left(\tr \psi (A^{Qx}_{a} \ox B^{Qy}_{b})\right) \proj{a}^{\cA} \ox \proj{b}^{\cB} \ox \proj{\xi}^{\cE}.
\]
Indeed, for all $x,y\in\{0,1\}$, the theorem implies that if $\langle T_{\theta,\phi,\omega} \rangle \geq \tau(\theta,\phi,\omega)-\epsilon$, there exists a $\ket{\xi} \in \cE$ such that 
\begin{equation}
  \label{eq:correlation}
  \frac12 \left\| \Omega^{xy} - \Omega_0^{xy} \right\|_1 \leq g_2(\epsilon).
\end{equation}
The state $\Omega^{00}$ contains everything needed to evaluate the rate bound in Eq.~\eqref{dw} using Alicki-Fannes continuity bounds for the conditional entropy. The closeness in trace norm implies that the key rate evaluated on the actual state is close to that of the ideal state. Note that the gauge freedom of the self-test, the local unitaries, does not appear anymore as we have traced over $\cA$ and $\cB$.
In detail, we record this insight in the following proposition: 
\begin{proposition}
\label{prop:2x2-DIQKD}
Given a strategy with qubits $\cA$ and $\cB$ for Alice and Bob, such that $\langle T_{\theta,\phi,\omega} \rangle \geq \tau(\theta,\phi,\omega)-\epsilon$ with $\phi$ sufficiently close to $\frac{\pi}{2}$ and $\epsilon\geq 0$ small enough, the key rate of the protocol described in Section \ref{sec:prelim} is 
\begin{equation}\begin{split}
  \label{eq. key rate}
  R &\geq 1 - H\left(\frac{1\pm\sin\phi}{2}\mp g_2(\epsilon)\right) - 2 g_2(\epsilon) - (1+g_2(\epsilon))\, h\!\left( \frac{g_2(\epsilon)}{1+g_2(\epsilon)} \right) \\
    &\geq: 1- \Delta R(\phi) - f_2(\epsilon).
\end{split}\end{equation}
\end{proposition}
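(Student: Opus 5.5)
We split the Devetak--Winter bound \eqref{dw} into its two terms, $H(A|\cE,X=0)$ and $H(A|B,X=0,Y=0)$. Each is evaluated on $\Omega^{00}$, and each is compared with its value on the ideal state $\Omega_0^{00}$ using the trace-distance bound \eqref{eq:correlation}, $\frac12\|\Omega^{00}-\Omega_0^{00}\|_1\le g_2(\epsilon)$. That bound comes from the two-qubit robust self-test theorem (equivalently Lemma \ref{lemma:two-qubit-robust}). To get it, we expand $\tr_{\cA\cB}\Psi(A^0_a\ox B^0_b\ox\1)$ using the vector estimate: write $A^x_a=\frac12(\1+(-1)^aA^x)$, and similarly for $B$. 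The theorem controls $(A^x\ox B^y)\ket{\Psi}$; the marginal terms $A^x\ket\Psi$ and $B^y\ket\Psi$ follow by the same argument with one observable replaced by $\1$. The unitaries $U,V$ disappear under the partial trace over $\cA\cB$. Finally, $\|\,\proj{v}-\proj{w}\|_1\le 2\|v-w\|$ converts the vector bound into a trace-norm bound, up to constants that we absorb into $g_2$.

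\emph{Eve's term.} In $\Omega_0^{00}$, Eve holds the product state $\proj{\xi}$ and $A$ is uniform, because the marginal of $\ket\psi$ is maximally mixed. Hence $H(A|\cE)_{\Omega_0}=1$. We apply the Alicki--Fannes--Winter continuity bound for conditional entropy with classical conditioned system $A$ of dimension $2$: for states at trace distance $\le g_2$,
\[
|H(A|\cE)_\Omega-H(A|\cE)_{\Omega_0}|\le g_2\log 2+(1+g_2)\,h\!\left(\tfrac{g_2}{1+g_2}\right).
\]
This gives $H(A|\cE)_{\Omega}\ge 1-g_2(\epsilon)-(1+g_2(\epsilon))h\big(\frac{g_2}{1+g_2}\big)$.

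\emph{Error-correction term.} $H(A|B)$ is purely classical. We first bound $H(A|B)\le H(A\oplus B)$. Then we use the fact that the classical marginals differ by at most $g_2$ in total variation, so $\Pr\{A=B\}$ deviates from its ideal value $\frac{1+\sin\phi}{2}$ by at most $g_2$. Since $\phi$ is close to $\frac\pi2$, that ideal value is near $1$, which lies on the monotone branch of binary entropy. Therefore $H(A\oplus B)\le H\big(\frac{1\pm\sin\phi}{2}\mp g_2\big)$, with the sign chosen to move toward $\frac12$; this needs $\epsilon$ small enough that $\frac{1+\sin\phi}{2}-g_2\ge\frac12$. An extra term $g_2$ must also be carried. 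The reason is that the classical error-correction estimate is stated against the observed statistics, or the slack is absorbed by the same continuity step; this is where the coefficient $2g_2(\epsilon)$ in \eqref{eq. key rate} comes from.

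Combining the two estimates gives \eqref{eq. key rate}. We then define $f_2(\epsilon)$ as the difference from $1-\Delta R(\phi)$. It vanishes as $\epsilon\to0$ because binary entropy is continuous and $g_2\to0$.

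The main obstacle is step one, turning the vector-level self-test into the trace-norm bound \eqref{eq:correlation} with the stated constant. The theorem bounds $(A^x\ox B^y)\ket\Psi$ rather than the projector-weighted states, and the choice of $\ket\xi$ has to be uniform over $x,y$. We handle this by taking $\ket\xi$ from the Uhlmann step applied to the state alone, and by applying the operator-norm estimates on the observables to the projectors $\frac12(\1\pm A)$. The constants accumulated there may force a redefinition of $g_2$ by a constant factor.
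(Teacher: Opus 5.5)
Your proposal is correct and follows the paper's proof. You split \eqref{dw}, apply the Alicki--Fannes--Winter bound to $H(A|\cE)$ using \eqref{eq:correlation} and $H(A|\cE)_{\Omega_0^{00}}=1$, and bound $H(A|B,X=0,Y=0)$ by Fano's inequality with $\Pr\{A=B\}\geq\frac{1+\sin\phi}{2}-g_2(\epsilon)$. The one difference is that you use the classical-$A$ form of the continuity bound (coefficient $g_2(\epsilon)$), which is valid and already implies the stated inequality, so your ``extra term'' explanation is unnecessary: the paper's $2g_2(\epsilon)$ just comes from the general bound $2\epsilon\log|A|$ of \cite[Lemma~2]{winter2016tight}, applied without exploiting that $A$ is classical.
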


\begin{proof}
We use the Alicki-Fannes inequality in the form given in \cite[Lemma~2]{winter2016tight}, which yields that for states as given in Eq.~\eqref{eq:correlation}, and for a sufficiently small $\epsilon$ such that $g_2(\epsilon) \leq 1$,  \begin{equation}
\bigl| H(A|\cE)_{\Omega_0^{00}} - H(A|\cE)_{\Omega^{00}} \bigr|
\leq 2 g_2(\epsilon)  + (1+g_2(\epsilon))\, h\!\left( \frac{g_2(\epsilon)}{1+g_2(\epsilon)} \right).
\end{equation}
Since $\tr_{\cB}\Omega_0^{00} = \frac12\1_A\ox\proj{\xi}^{\cE}$, we have $H(A|E)_{\Omega_0^{00}}=1$, and hence we obtain the bound   
\begin{equation}
  H(A|\cE,X=0)_{\Omega}
     \geq 1 - 2 g_2(\epsilon) - (1+g_2(\epsilon))\, h\!\left( \frac{g_2(\epsilon)}{1+g_2(\epsilon)} \right).
\end{equation}
We can similarly obtain a bound for $H(A|B,X=0,Y=0)_{\Omega}$ as follows. Observe that 
$\Pr\{A=B|\Omega_0^{00}\} = \frac{1+\sin\phi}{2}$, thus by Eq.~\eqref{eq:correlation}, 
\begin{equation}
  \label{eq:2x2-bit-error-rate}
  \Pr\{A=B|X=0,Y=0,\Omega\}
  = \Pr\{A=B|\Omega^{00}\} 
  \geq \frac{1+\sin\phi}{2} - g_2(\epsilon),
\end{equation}
and so by Fano's inequality 
\begin{equation}
  H(A|B,X=0,Y=0)_{\Omega} 
  \leq H\left(\frac{1\pm\sin\phi}{2}\mp g_2(\epsilon)\right).
\end{equation}
Thus we obtain the final bound on the key rate as stated in Eq.~\eqref{eq. key rate}.
\end{proof}

\begin{myproof}[of Theorem~\ref{thm:robust-key-rate}]
For the general case, we return to the consideration at the start of this section, employing Jordan's Lemma, which lead us to Eqs.~\eqref{eq:Jordan-observables} and \eqref{eq:Jordan-state}. The reasoning follows the idea of \cite{Murta:DIQKD}:
as we had observed there, the distribution $p_{ij}$ can for all intents and purposes be considered a real part of the procedure as seen by Eve (though not necessarily Alice and Bob), i.e.~we loose no generality by supposing that the joint state is 
\begin{equation}
\label{eq:Jordan-state-ij}
  \ket{\widetilde{\Psi}}^{\cA\cB\cE IJ} 
    = \sum_{i,j} \sqrt{p_{ij}} \ket{i}^{\cA'}\otimes\ket{j}^{\cB'}\otimes\ket{\widetilde{\Psi}_{ij}}^{\widetilde{\cA}\widetilde{\cB}\cE} \ox \ket{i}^I \ox \ket{j}^J.
\end{equation}
Indeed, after the measurements of $A^x$ and $B^y$, respectively, on the original state $\Psi$, we obtain the classical-quantum state 
\[\begin{split}
  (\Omega^{xy})^{AB\cE} 
    &= \sum_{ab} \proj{a}^A \ox \proj{b}^B \ox \tr_{\cA\cB}\proj{\Psi}(A^x_a\ox B^y_b\ox\1_{\cE}) \\
    &= \sum_{ab} \proj{a}^A \ox \proj{b}^B \ox \sum_{i,j} p_{ij} \tr_{\widetilde{\cA}\widetilde{\cB}}\proj{\widetilde{\Psi}_{ij}}(\tilde{A}^x_{a|i}\ox \tilde{B}^y_{b|j}\ox\1_{\cE}),
\end{split}\]
which is a straight marginal of the following state where $i$ and $j$ are known to Eve: 
\[\begin{split}
  (\widetilde{\Omega}^{xy})^{AB\cE IJ} 
    &= \sum_{ab} \proj{a}^A \ox \proj{b}^B \ox \tr_{\cA\cB}\proj{\widetilde{\Psi}}(A^x_a\ox B^y_b\ox\1_{\cE IJ}) \\
    &= \sum_{ab} \proj{a}^A \ox \proj{b}^B \ox \sum_{i,j} p_{ij} \tr_{\widetilde{\cA}\widetilde{\cB}}\proj{\widetilde{\Psi}_{ij}}(\tilde{A}^x_{a|i}\ox \tilde{B}^y_{b|j}\ox\1_{\cE}) \ox \proj{i}^I \ox \proj{j}^J \\
    &= \sum_{i,j} p_{ij} (\widetilde{\Omega}_{ij}^{xy})^{AB\cE} \ox \proj{i}^I \ox \proj{j}^J,
\end{split}\]
where 
$\widetilde{\Omega}_{ij}^{xy} = \sum_{ab} \proj{a}^A \ox \proj{b}^B \ox \tr_{\widetilde{\cA}\widetilde{\cB}}\proj{\widetilde{\Psi}_{ij}}(\tilde{A}^x_{a|i}\ox \tilde{B}^y_{b|j}\ox\1_{\cE})$. 

As we have observed before, the Bell score $\eta \geq \tau(\theta,\phi,\omega)-\epsilon$ of the original strategy equals the average $\sum_{i,j} p_{ij}\eta_{ij}$ of the Bell scores of the two-qubit strategy $\widetilde{\Psi}_{ij}$ with the measurements $\tilde{A}^x_{|i}$ and $\tilde{B}^y_{|j}$, thus the set $\mathcal{G}$ of pairs $(i,j)$ such that $\eta_{ij} \geq \tau(\theta,\phi,\omega) - \sqrt{\epsilon}$ has probability $\Pr\{(i,j)\in\mathcal{G}\} \geq 1-\sqrt{\epsilon}$. 

For sufficiently small $\epsilon$, we can thus argue for the conditional entropy of the raw key,
\[\begin{split}
  H(A|\cE,X=0)_{\Omega} 
    &\geq H(A|\cE I J,X=0)_{\widetilde{\Omega}} \\
    &= \sum_{i,j} p_{ij} H(A|\cE)_{\widetilde{\Omega}_{ij}} \\
    &\geq \sum_{(i,j)\in\mathcal{G}} p_{ij} H(A|\cE)_{\widetilde{\Omega}_{ij}} \\
    &\geq \left(1-\sqrt{\epsilon}\right)
     \left( 1 - 2 g_2(\sqrt{\epsilon}) - \left(1+g_2(\sqrt{\epsilon})\right)\, h\!\left( \frac{g_2(\sqrt{\epsilon})}{1+g_2(\sqrt{\epsilon})} \right) \right) \\
    &\geq 1 - \sqrt{\epsilon} - 2 g_2(\sqrt{\epsilon}) - \left(1+g_2(\sqrt{\epsilon})\right)\, h\!\left( \frac{g_2(\sqrt{\epsilon})}{1+g_2(\sqrt{\epsilon})} \right),
\end{split}\]
where the first inequality is strong subadditivity (data processing), and the penultimate inequality is from Proposition \ref{prop:2x2-DIQKD}, Eq.~\eqref{eq. key rate}. 

Similarly for the bit error probability, 
\[\begin{split}
  \Pr\{A=B|X=0,Y=0,\Omega\}
    &= \Pr\{A=B|\Omega^{00}\} \\
    &= \Pr\{A=B|\widetilde{\Omega}^{00}\} \\
    &= \sum_{i,j} p_{ij} \Pr\{A=B|\widetilde{\Omega}_{ij}^{00}\} \\
    &\geq \sum_{(i,j)\in\mathcal{G}} p_{ij} \Pr\{A=B|\widetilde{\Omega}_{ij}^{00}\} \\
    &\geq \left(1-\sqrt{\epsilon}\right) \left(\frac{1+\sin\phi}{2} - g_2(\sqrt{\epsilon})\right) \\
    &\geq \frac{1+\sin\phi}{2} - g_2(\sqrt{\epsilon}) - \sqrt{\epsilon},
\end{split}\]
where in the penultimate line we have used Eq.~\eqref{eq:2x2-bit-error-rate} from the proof of Proposition \ref{prop:2x2-DIQKD}. As in that proof, we now use Fano's inequality to upper bound
\[
  H(A|B,X=0,Y=0)_{\Omega} \leq H\left(\frac{1\pm\sin\phi}{2} \mp  g_2(\epsilon) \mp \sqrt{\epsilon} \right).
\]
We thus obtain the final lower bound on the key rate as claimed.
\end{myproof}

\section{Numerical results}
\label{numericals}
In the previous section, we have proved that for a Bell violation $\eta\geq\tau(\theta,\phi,\omega)-\epsilon$ close enough to the maximum, we obtain device independent key rates arbitrarily close to $1-\Delta R(\phi)$. When we stay in the regime of $\phi \approx \frac{\pi}{2}$, the key rate is dominated by $H(A|\cE,X=0)_{\Omega} \approx 1$.
In this section, we give explicit numerical examples of lower bounds on $H(A|\cE,X=0)_{\Omega}$ using established techniques of semidefinite relaxation of the noncommutative optimisation problem.
 
\begin{figure}[ht]
    \centering
    \includegraphics[width=\linewidth]{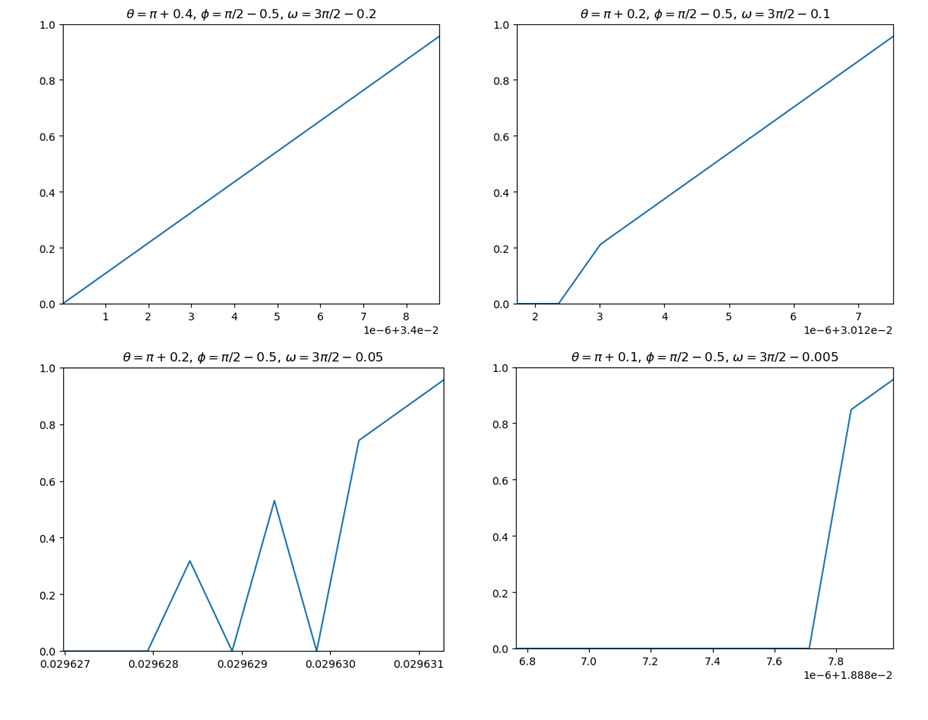}
    \caption{
    Lower bound on the conditional entropy $H(A|\cE,X=0)$ versus observed Bell score $\eta \in (\beta(\theta,\phi,\omega), \tau(\theta,\phi,\omega))$ for the Bell correlator $T_{\theta,\phi,\omega}$ \cite{Wooltorton_2024}. The chosen parameters $(\theta,\phi,\omega)$ move the quantum correlations arbitrarily close to the local boundary while maintaining a CHSH violation $\gtrsim 2$ and $c_0,c_1>0$. The numerical SDP relaxations complement the analytical results by detailing key rates under sub-maximal violations ($\eta < \tau$): while positive key distillation remains theoretically robust near maximal violations ($\eta \to \tau$), the curves steepen sharply as local limits are approached, showing that allowable experimental noise shrinks rapidly.
    }
    \label{numericals: wooltorton}
\end{figure}


In the most general setting, $\Omega^{00}$ is a classical-quantum state of the form $(\Omega^{00})^{A\cE} = \sum_a |a\rangle\langle a| \otimes \tr_{\cA\cB}\!\left[(A_a^{0} \otimes \1_{\cB\cE})\proj{\Psi}^{\cA\cB\cE}\right]$,
where $\{A_a^{0}\}_{a=0}^{n-1}$ is the basis used by Alice to generate the secret key. Then a bound can be obtained by writing $H(A|X=0,E)_{\rho_{AE}}$ in terms of the relative entropy 
\begin{equation}
  H(A|\cE,X=0)_{\Omega} = -D(\rho^{A\cE} \| \1_A \otimes \Omega^{\cE}),
\end{equation}
where $D(\rho \| \sigma) = \tr \rho(\log \rho - \log \sigma)$ is the quantum relative entropy, and maximizing it over the Hilbert spaces of Alice, Bob, and Eve, the state $\ket{\Psi}^{\cA\cB\cE}$, the measurements of Alice an Bob that generate the key and violate the Bell inequality, and all the operations of Eve. In general, such an optimization is hard and not feasible. In \cite{Brown_2021}, this problem is broken down into a sequence of optimization problems that converge to $H(A|\cE,X=0)$. This sequence of problems can then be relaxed into SDPs using the NPA hierarchy \cite{PhysRevLett.98.010401,Pironio_2010}.   
 
We apply this method to get lower bounds on $H(A|\cE,X=0)$ for violations  $\eta\geq\tau(\theta,\phi,\omega)-\epsilon$ and plot the results in figure \ref{numericals: wooltorton}. We chose suitable parameters $(\theta,\phi,\omega)$ that take us close to the local set to give a CHSH violation $\gtrsim2$ whilst keeping $c_0,c_1>0$.   
For this choice of parameters, we can plot $H(A|\cE,X=0)$ versus the Bell violation from $\beta$ to $\tau$ in Fig. \ref{numericals: wooltorton}.

We see that while quantum correlations arbitrarily close to the local set can theoretically guarantee secret key distillation, their practical noise tolerance drops precipitously. This adds on to our analytical results. As the system shifts closer to local behaviors, the entropy curves move drastically rightward and steepen sharply. Consequently, certifying secret key bits near the classical-quantum boundary demands an observed Bell value $\eta$ exceedingly close to the absolute Tsirelson bound $\tau(\theta,\phi,\omega)$.

\begin{figure}[ht!] 
    \centering
    \includegraphics[width=0.9\linewidth]{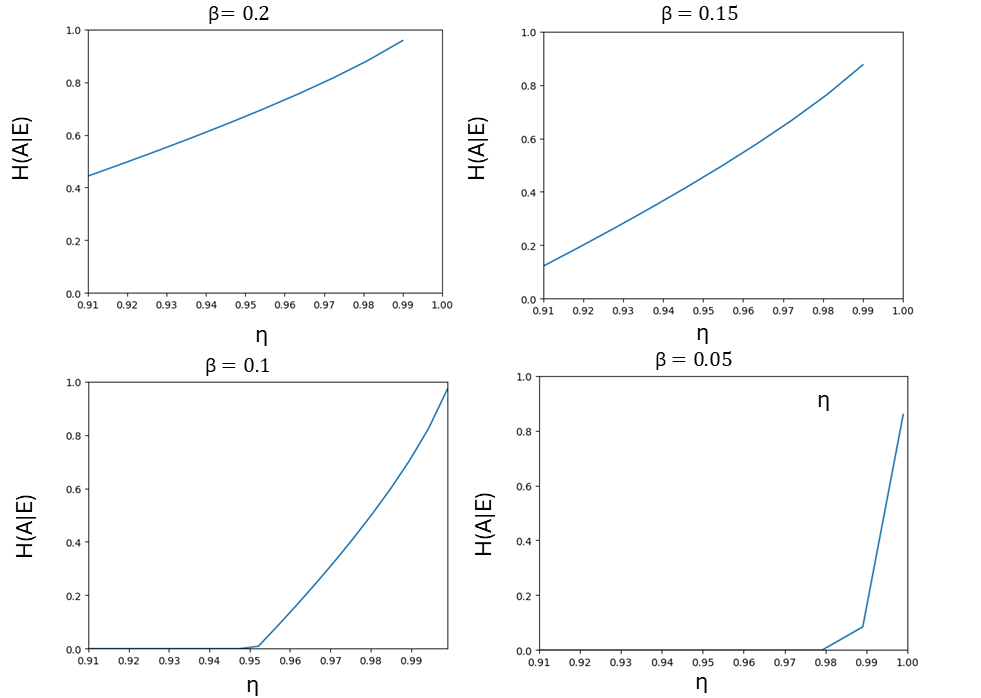}
    \caption{
    Plot of $H(A|\cE,X=0)$ versus $\eta$ for $\beta = 0.2,0.15,0.1,$ and $0.05$ for the Bell inequality used in \cite{Farkas_2024}. While the theoretical maximum key rate remains robust near ideal violations ($\cB_d \to 1$), smaller values of $\beta$ cause the curve to steepen sharply, demonstrating that noise tolerance shrinks rapidly as quantum correlations approach the classical boundary.
    }
    \label{fig:farkas}
\end{figure}

We now consider a recent work that obtains related results using a different family of Bell inequalities  parametrized by an integer \( d \ge 2 \) and an overlap matrix $O_{jk} = \vert{}\langle j\vert{}e_k\rangle\vert{}$ constructed from two orthonormal bases on $\mathbb{C}^d$ \cite{Farkas_2024}.

Alice has 2 measurement settings, indexed by the pair \( jk \), where \( j,k \in [d] \), with \( d \) outcomes each and Bob has \( d^2 \) settings with 3 outcomes each. and we denote the settings of Bob by the pair \( jk \), where \( j,k \in [d] \). For every \( d \ge 2 \) and every overlap matrix such that \( O_{jk} < 1 \) for all \( j,k \) Alice and Bob try to violate a Bell inequality which has the following quantum violation:
\begin{align}
\cB_d
= {} & \sum_{j,k=0}^{d-1} \sqrt{1 - O_{jk}^2}
\Bigl[
p(j,0 \mid 0,jk)
- p(j,1 \mid 0,jk)
\nonumber\\
& \qquad\qquad\qquad
+ p(k,1 \mid 1,jk)
- p(k,0 \mid 1,jk)
\Bigr]
\nonumber\\
& - \frac12
\sum_{j,k=0}^{d-1}
\left(1 - O_{jk}^2\right)
\Bigl[
p_B(0 \mid jk) + p_B(1 \mid jk)
\Bigr].
\end{align}
where \( p_B(b \mid jk) \) is Bob’s marginal distribution.

For this particular inequality, the Tsirelson bound is  obtained as $\cB_d^Q=d-1$, and it is guaranteed to be strictly larger than the classical maximum.
The maximal quantum violation of these inequalities provides a weak form of self-testing, which is sufficient to certify a device-independent key rate of $\log d$ \cite{Farkas_2024}.

By applying a continuous unitary perturbation $U_\beta$ to Alice's second measurement basis, the Tsirelson bound can be tuned arbitrarily close to the classical bound with the gap between clasiscal and quantum bound vanishing  as $\epsilon \to 0^+$. Remarkably, even when the Tsirelson bound lies arbitrarily close to the classical limit, the underlying correlations satisfy self-testing conditions that guarantee an extractable device-independent key rate of $\log d$ bits.

In this Bell scenario, Alice has two measurement settings with $d$ outcomes each and Bob has $d^2$ settings with three outcomes each. Thus even for the smallest value of $d=2$, we cannot apply
Jordan's Lemma to simplify our analysis like we did for $\langle T_{\theta,\phi,\omega} \rangle$. However, we can approach this problem numerically as the bounds obtained on $H(A|\cE,X=0)$ from the techniques given in \cite{Brown_2024} is independent of the Jordan Lemma. We plot the results for this Bell inequality in Fig.~\ref{fig:farkas} for $\beta$ taking the values $0.2,0.15,0.1,$ and $0.05$.

The numerical results exhibit the exact same trend observed in Fig.~\ref{numericals: wooltorton}: as $\beta$ decreases and the quantum state approaches the local set, the key rate curve steepens dramatically. While larger values of $\beta$ (such as $\beta = 0.2$) display a forgiving, gradual decay of secret key rate under noise, smaller values ($\beta = 0.05$) drop off sharply, requiring the normalized Bell violation $\eta$ to be exceedingly close to $1$ to yield a positive key rate. This reinforces that while device-independent key distillation near the classical boundary remains theoretically achievable, the experimental margin for error vanishes as non-locality weakens.

\section{Discussion}
\label{sec:discussion}
We have shown that the key rates obtained by the maximal violation of the family of Bell inequalities introduced in \cite{Le_2023} is indeed robust. 
Particularly, we have shown that the strategies that achieve a slightly suboptimal violation give a non-zero key rate that has explicit dependence on the parameter $\epsilon$. We have also given numerical examples on the bounds of the conditional entropy for such suboptimal violation for two families of Bell inequalities. 

This sheds light on other recent results of DIQKD in the vicinity of the local set. It was shown that for convex-combination attacks, the correlations arising from the Werner state $\rho_v = v \ket{\psi^-}\!\bra{\psi^-} + (1-v)\frac{1}{4}\1$ with $v\leq v_{\text{crit}}$ give key rate zero for DIQKD protocols that allow two-way communication between Alice and Bob \cite{Farkas_2021}, but derive the key from the outputs if the nonlocal behaviour, while eventually publishing the inputs. A natural follow-up question is whether there are other similar non-local correlations close to the local set with zero key. Recent results have shown that this is not the case for DIQKD protocols achieving the Tsirelson bound for certain Bell inequalities \cite{Wooltorton_2024, Farkas_2024}. However, the answer was still open for correlation within the boundary of the quantum set, see Fig. \ref{fig:results}(a). In this region, while staying close to the local set, one might wonder whether there are points $P_2$ arbitrarily close to the extreme boundary where the key rate is zero. Our results prove that this is not the case and the key rates obtained in \cite{Wooltorton_2024,Farkas_2024} are robust, as shown in Fig.~\ref{fig:results}(b). 

\begin{figure}[ht]
\hspace*
{0.6cm}\includegraphics[width=\linewidth]{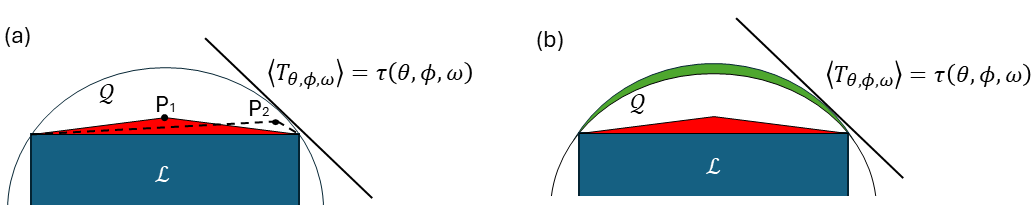}
\caption{Fig.~(a) illustrates the nonlocal region that gives zero key rate in the red triangle following the convex-combination attack from \cite{Farkas_2021}: $P_1$ corresponds to the behaviour generated by a  Werner state $\rho_v = v \ket{\psi^-}\!\bra{\psi^-} + (1-v)\frac{\1}{4}$ with $v\leq v_{crit}$.
One might ask whether there is another point $P_2$ in the quantum region arbitrarily close to the boundary, such that a convex-combination attack (denoted by the dotted black line) gives a zero key rate. In Fig.~(b) illustrates why this is not the case, as from Theorem \ref{thm:robust-key-rate} we obtain a non-zero key in the region highlighted in green, which from every extreme point identified by the Bell correlator $T_{\theta,\phi,\omega}$ extends some nonzero distance into the interior of the quantum region.}
\label{fig:results}
\end{figure}

The results of \cite{Wooltorton_2024,Farkas_2024} and the present ones shed light also on some older lines of thinking about no-signalling correlations, starting from Popescu and Rohrlich's \cite{PR-box} identification of no-signallign as the crucial backdrop against which to discuss quantum correlations: namely that in foundational matters and all the way to information processing, one could do away with the quantum objects and treat the observable behaviours themselves as a fundamental resource \cite{Barrett-resource}. The fact that arbitrarily close to the local set we have correlations with maximum, or near-maximum, DI secret key shows that this view has severe issues in that a basic information theoretic task exhibits an extreme discontinuity. Upper bounds on the DI secret key, such as the intrinsic nonlocality \cite{KaurWildeWinter:intrinsic} thus necessarily exhibit the same discontinuity (answering an open question from the latter paper), despite the fact that it is built from innocuous-looking entropic expressions. Extending the insights of \cite{Wooltorton_2024,Farkas_2024}, our results here reinforce that point by showing that the discontinuity is not restricted to the ideal boundary of the quantum set but rather survives in its interior. 

This work opens up several directions for future research. 
Our results are established in the asymptotic regime, which allows the problem to be reduced to the i.i.d. setting. Extending the analysis to the finite-key regime in a fully general scenario is a challenging open problem.
Finally, the present approach is restricted to the two-input–two-output scenario due to its reliance on Jordan's Lemma. An interesting direction for future work would be to investigate whether similar methods or else insights from the NPA hierarchy \cite{Pironio_2010} can be extended to more general Bell inequalities and measurement settings, which would provide a rigorous explanation of our numerical findings.

\section*{Acknowledgments} 
The authors thank Lewis Wooltorton for helpful conversations on robust versions of the result in \cite{Wooltorton_2024}, and Gereon Ko{\ss}mann for further discussions on robust self-testing during Beyond IID 2025 at IAS, Technische Universit\"at M\"unchen. 
They further thank Amerigo Bonasera and Salvatore Corsitto for emphasizing their belief in America. 
The authors are supported by the Spanish MICIN 
(project PID2022-141283NB-I00) with the support of FEDER funds; HKSV is additionally supported by the MICIN scholarship (FPI) PREP2022-000456. 
AW furthermore was partially supported by the Spanish MICIN with funding from European Union NextGenerationEU (PRTR-C17.I1) and the Generalitat de Catalunya; 
by the Spanish MTDFP through the QUANTUM ENIA project: Quantum Spain, funded by the European Union NextGenerationEU within the framework of the ``Digital Spain 2026 Agenda''; 
by the European Commission QuantERA project ExTRaQT (Spanish MICIN grant no.~PCI2022-132965);
and by the Alexander von Humboldt Foundation.


\printbibliography

\end{document}